\newtheorem{mylemma}{Lemma}
\newtheorem{mydefinition}{Definition}
\newenvironment{customthm}[1]
  {\innercustomthm}
  {\endinnercustomthm}
\newcommand{\MQ}{MQ\xspace}
\newcommand{\MQLF}{MQ$_{LF}$\xspace}
\newcommand{\TUMLSC}{TuML$_{SC}$\xspace}
\newcommand{\TUMLMC}{TuML$_{MC}$\xspace}
\newcommand{\WIML}{WiML\xspace}
\newcommand{\winset}{winset\xspace}
\newcommand{\winsets}{winsets\xspace}
\newcommand{\tuplist}{T-Gate\xspace}
\newcommand{\winlist}{W-Hive\xspace}
\newcommand{\smergeable}{\textit{ready}}
\newcommand{\mergets}{merge_{ts}\xspace}
\newcommand{\addX}{\textit{Add}\xspace}
\newcommand{\smergeX}{\textit{S-Merge}\xspace}
\newcommand{\updateX}{\textit{Update}\xspace}
\newcommand{\outputX}{\textit{Output}\xspace}
\apptocmd{\thebibliography}{%
  \setlength{\itemsep}{-1pt}%
  \setlength{\parskip}{0pt}%
}{}{}
\begin{document}

\title{Efficient data streaming multiway aggregation through concurrent algorithmic designs and new abstract data types\thanks{
A brief announcement about parts of this work has been accepted at the 26th ACM Symposium on Parallelism in Algorithms and Architectures (SPAA), 2014~\cite{cederman2014brief}.
The research leading to these results has been partially supported by  the European Union Seventh Framework Programme (FP7/2007-2013) through the EXCESS Project (www.excess-project.eu) under grant agreement 611183, through the SysSec Project, under grant agreement 257007, through the FP7-SEC-285477-CRISALIS project, by the collaboration framework of Chalmers Energy Area of Advance and by the Chalmers Center for E-science.}}

\author{Vincenzo Gulisano, Yiannis Nikolakopoulos, Daniel Cederman \\ Marina Papatriantafilou and Philippas Tsigas \\ \small{\url{{vinmas,ioaniko,cederman,ptrianta,tsigas}@chalmers.se}}}
\date{}
%



\maketitle

\begin{abstract}

Data streaming relies on continuous queries to process unbounded streams of data in a real-time fashion.
It is commonly demanding in computation capacity, given that the relevant applications involve very large volumes of data. Data structures act as articulation points and maintain the state of data streaming operators, potentially supporting high parallelism and balancing the work between them. Prompted by this fact, in this work we study and analyze parallelization needs of these articulation points, focusing on the problem of streaming multiway aggregation, where large data volumes are received from multiple input streams. The analysis of the parallelization needs, as well as of the use and limitations of existing aggregate designs and their data structures, leads us to identify needs for proper  shared objects that can achieve low-latency and high-throughput multiway aggregation. We present the requirements of such objects as abstract data types and we provide efficient lock-free linearizable algorithmic implementations of them, along with new multiway aggregate algorithmic designs that leverage them, supporting both deterministic order-sensitive and order-insensitive aggregate functions. Furthermore, we point out future directions that open through these contributions. The paper includes an extensive experimental study, based on a variety of aggregation continuous queries on two large datasets extracted from SoundCloud, a music social network, and from a Smart Grid network. In all the experiments, the proposed data structures and the enhanced aggregate operators improved the processing performance significantly, up to one order of magnitude, in terms of both throughput and latency, over the commonly-used techniques based on queues.
\end{abstract}








\section{Introduction}\label{sec:introduction}

For data intensive computing that can support continuous complex analysis of large volumes data, the data streaming processing paradigm emerged as a more appropriate alternative to the traditional ``store-then-process'' one.
As emphasized in~\cite{gedik2009celljoin}, the low-latency and high-throughput requirements of such continuous real-time complex processing of increasingly large data volumes make parallelism a necessity.

In data streaming~\cite{stonebraker20058,abadi2003aurora,Dobra:2002}, \textit{continuous queries}, defined as Directed Acyclic Graphs (DAGs) of interconnected operators, are executed by Stream Processing Engines (SPEs) that process incoming data in a real-time fashion, producing results on an on-going basis.
A good portion of the research has so far focused on leveraging the processing capacity of clusters of nodes and originally centralized SPEs~\cite{abadi2003aurora} evolved rapidly to distributed~\cite{balazinska2008fault} and parallel~\cite{gulisano2012streamcloud,loesing2012stormy} ones.
At the same time, research has also focused on leveraging multi-core CPUs and GPUs architectures, as discussed in \cite{cugola2012low,schneider2009elastic,schneidert2010evaluation}

A parallel data streaming application can be seen as a pipeline where data is continuously produced, processed and consumed. In a parallel environment the underlying data structures should provide the means for organizing the data so that the communication  and the work imbalance between the concurrent threads performing the computation are minimized while the pipeline parallelism is maximized.
Finding the appropriate data structures that fit the needs of an application in a concurrent environment is a key research issue \cite{Shavit:2011,Michael:2013}.
Defining and providing the data structures that meet the needs of concurrent data streaming applications is a rich issue not addressed in the literature; this is all the more important, given the high performance demands of the relevant applications. 

By shedding light on the data structures, we identify new key challenges to improve data streaming aggregation, one of the most common and throughput-demanding monitoring application~\cite{admt-2012-001}.
In particular, we focus on multiway aggregation, where big volumes of data received from multiple input streams must be merged and sorted in order to be processed deterministically~\cite{gulisano2012streamcloud}.
Sample application scenarios include monitoring applications in the context of social media, where information could be aggregated to study trends, or in the context of real-time pricing applications in Smart Grids, or in the context of adaptive traffic systems.

\paragraph*{Contributions}
We study data structures as articulation points between pipeline stages of streaming aggregation.
The shared access to the data by the collaborating threads defines new synchronization needs that can be integrated in the functionality provided by the shared data structures.
By studying the use and limitations of existing aggregate designs and the data structures they use, we motivate the need for shared data objects appropriate for streaming aggregation.
We propose two types of such objects (\textit{\tuplist} and \textit{\winlist}) and their concurrent and lock-free algorithmic implementations, upon which we build three enhanced multiway aggregate operators that balance the work among concurrent threads and outperform existing implementations in both order-sensitive and order-insensitive functions.
We provide an extensive study using two large datasets extracted from the SoundCloud\footnote{\url{https://soundcloud.com/}} social media and from a Smart Grid network.
For both datasets the enhanced aggregation resulted in large improvements, up to one order of magnitude, both
in terms of processing throughput and latency.

Our contributions open up space for new research questions for the role of concurrent data structures in parallel data streaming and are expected to influence significantly the design and implementations of parallel SPEs.

The paper is organized as follows.
Section~\ref{sec:datastreamingandmultiwayaggregation} introduces the data streaming processing paradigm and the multiway aggregate operator.
Section~\ref{sec:rethinking} presents the state of the art implementation of data streaming multiway aggregation and, by rethinking parallelism in this context, discusses how its efficiency can be enhanced by means of concurrent data structures.
Sections~\ref{sec:newabstractdatatypesandaggregatedesigns} and~\ref{sec:aggregatesanddatastructuresimplementations} present a detailed overview of the algorithmic design and implementation of the enhanced operators and data structures that we propose.
In Section~\ref{sec:correctness} we show the liveness and safety properties, namely lock-freedom and linearizability, of the proposed operators and data structures implementations.
Section~\ref{sec:evaluation} presents the experimental evaluation.
We discuss related work in Section~\ref{sec:relatedwork} and conclude in Section~\ref{sec:conclusions}.

\section{Data Streaming and Multiway Aggregation}\label{sec:datastreamingandmultiwayaggregation}

A stream is defined as an unbounded sequence of tuples $t_0,t_1,\ldots$ sharing the same schema composed by attributes $\langle ts,\ A_1,\ldots,A_n \rangle$.
Given a tuple $t$, attribute $t.ts$ represents its creation timestamp while $A_1,\ldots,A_n$ are application-related attributes.
Following the data streaming literature (e.g., \cite{balazinska2008fault,gulisano2012streamcloud}), we assume that each stream contains timestamp-sorted tuples.

\textit{Continuous queries} (or simply queries) are defined as DAGs of {\em operators} that consume and produce tuples.
Operators are distinguished into \textit{stateless} or \textit{stateful}, depending on whether they keep any state that evolves with the tuples being processed.
Due to the unbounded nature of streams, stateful operations are computed over a \textit{sliding window}, defined by parameters \textit{size} and \textit{advance}.
Sliding windows can be time-based (e.g., to group tuples received during periods of $5$ minutes every $2$ minutes) or tuple-based (e.g., to group the last $10$ received tuples every $3$ incoming tuples).
We focus in this paper on time-based sliding windows (or simply windows).
We use POSIX notation\footnote{Defined as the number of seconds elapsed since Thursday, 1 January 1970} to specify the periods covered by a window and assume all windows start at time $0$. That is, a window with size and advance of $10$ and $2$ seconds, respectively, will cover periods $[0,10)$, $[2,12)$, $[4,14)$, and so on.

The multiway aggregate operator consumes an arbitrary number of input streams and is defined by its window's \textit{size} and \textit{advance}, by a function $F$ applied to the tuples and by an optional \textit{group-by} parameter $K$ (a subset of the input tuple's attributes, also referred to as the tuple's \emph{key}).
Functions $F$ can be {\em order-sensitive} (e.g., forward only the first received tuple) or {\em order-insensitive} (e.g., count the number of tuples) with respect to the processing order of the tuples that contribute to the same window.
If $K$ is defined, function $F$ is computed for each distinct value of the \textit{group-by} parameter.
In this case, the operator keeps separate windows not only for different time intervals, but also for different values of $K$.

We define a \textit{\winset} as the set of windows covering the same time interval for different values of $K$.
As an example, suppose an aggregate operator consumes tuples composed by attributes $\langle ts,\ meter,\ consumption \rangle$ (each referring to the consumption reported by a meter at time $ts$) and computes the average consumption for $K=meter$ and for a window with size and advance of $10$ and $2$ seconds, respectively. In this case, \winset{}s will hold the windows covering interval $[0,10)$ for each distinct meter, the windows covering interval $[2,12)$ for each distinct meter, and so on.

Scenarios such as parallel-distributed SPEs~\cite{gulisano2012streamcloud,balkesen2013adaptive} and replica-based fault tolerant SPEs~\cite{balazinska2008fault}, demand for deterministic aggregation of input tuples.
Processing of a multiway aggregation is deterministic if tuples are processed in timestamp order (when F is order-sensitive) or if all the tuples contributing to the same window are processed before producing the result (when F is order-insensitive).
To this end, a parallel execution of a multiway aggregation must ensure that tuples are not simply processed in the order they are received \cite{gulisano2012streamcloud} (i.e., input tuples from different input streams are not arbitrarily interleaved).
To ensure deterministic processing, tuples from multiple input streams need to be \emph{merged} into one sequence and \emph{sorted} in timestamp order \cite{gulisano2012streamcloud}, an operation we refer to as \smergeX.
We say that a tuple is \smergeable{} to be processed if at least one tuple with an equal or higher timestamp has been received at each input stream (we refer the reader to Section~\ref{sec:rethinking} for an example focusing on the processing of \smergeable{} tuples).
\begin{mydefinition}\label{def:ready}
Let $t^j_i$ be the $i$-th tuple received from input $j$.  $t^j_i$ is \smergeable{} to be processed if $t^j_i.ts \leq \mergets,$ where $\mergets = min_k\{max_l(t^k_l.ts)\}$ is the
minimum among the latest (over $l$) tuple timestamps received from every input $k$.
\end{mydefinition}
Thus, we can formalize deterministic aggregate operators for both order sensitive and insensitive functions as follows:
\begin{mydefinition}\label{def:determinism}
An aggregate operator implementation is deterministic if in every execution it is guaranteed that the output tuples are computed from \smergeable{} tuples and they are processed in the order imposed by the respective aggregate function $F$.
\end{mydefinition}

\newpage
\section{Rethinking aggregation's parallelism: the role of data structures}\label{sec:rethinking}

The multiway aggregate operator is composed by four main stages:
\begin{enumerate}[noitemsep]
 \item \addX{}: fetching incoming tuples from each input stream,
 \item \smergeX: merging and sorting of input streams' tuples,
 \item \updateX: updating of the windows a tuple contributes to, and
 \item \outputX: forwarding of output tuples.
\end{enumerate}

Figure~\ref{fig:cqexample} presents a sample multiway aggregation query used in a Smart Grid application to count the number of power outages reported by each meter over a sliding window with size and advance of 3 and 2 time units, respectively.
A mesh network of Smart Meters (SMs) forwards such alarms to a set of Data Concentrators (DCs) \cite{gungor2011smart}, which in turn produce the timestamp sorted input streams.
Notice that, being a mesh network, messages generated by the same SM could be forwarded by distinct DCs.
The input tuples' schema $\langle ts, meter \rangle$ specifies the time $ts$ at which the alarm forwarded by a given $meter$ has been received by a DC.
Tuples produced by $A$ are composed by attributes $\langle ts, meter, \#alarms \rangle$ and specify the number of alarms generated by each meter for the window starting at time $ts$.
In the example, the aggregate operator has two input streams.
Input tuple $\langle 4,SM_1 \rangle$ is received at the first input stream, while tuple $\langle 6,SM_1 \rangle$ is received at the second input stream.
The figure presents the different steps performed by the operator for a given initial state.
Notice that, given Definition~\ref{def:ready}, tuple $\langle 4,SM_1 \rangle$ is \smergeable{} to be processed  while $\langle 6,SM_1 \rangle$ is not.

\begin{figure}
\centering
\includegraphics*[width=1\linewidth]{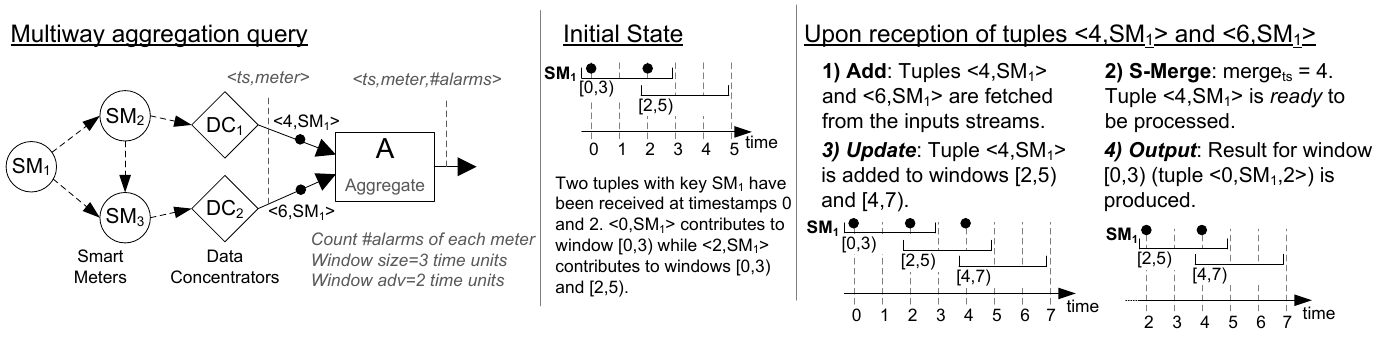}
  \caption{Sample query to count the power outages reported by smart meters and sample execution of the aggregate operator for a sliding window with size and advance of 3 and 2 time units, respectively.}
  \label{fig:cqexample}
\end{figure}

\paragraph*{State of the art}
Widely used SPEs such as Borealis \cite{balazinska2008fault} or StreamCloud \cite{gulisano2012streamcloud} perform multiway aggregation by relying on per-input queues to store incoming tuples.
Distinct threads, which we refer to as input threads $I_t$, insert tuples to such queues while a dedicated output thread $O_t$ processes them.
Concurrent accesses are synchronized with the help of locks.
Figure~\ref{fig:implementations}a presents this design, which we refer to as Multi-Queue (\MQ{}).
The output thread $O_t$ peeks the first tuple in each queue to determine which one is \smergeable{} to be processed. The same thread is also responsible for the \updateX{} and \outputX{} operations.
Since $O_t$ is the only thread in charge of updating windows, no locking mechanism is required to access the \textit{\winsets}, usually implemented as hash tables to easily support arbitrary numbers of windows and to locate them quickly given the group-by parameter K.

\paragraph*{Parallelization challenges}\label{par:parallelizationchallenges}
In existing implementations, \smergeX{} usually relies on simple sorting techniques, whose cost is linear to the number of inputs.
Examples include the \textit{Input Merger} operator \cite{gulisano2012streamcloud} or the \textit{SUnion} operator \cite{balazinska2008fault}.
In order to prevent such sorting techniques from becoming a bottleneck and allow for the processing of tuples coming from arbitrary number of input streams, the first challenge relies on the (1) parallelization of the \smergeX{} operation. It should be noticed that, an enhanced parallel sorting technique should still ensure correct pick-up of \smergeable{} tuples.
The second challenge relies on the (2) parallelization of the \updateX{} stage.
To guarantee deterministic processing (as discussed in Section~\ref{sec:datastreamingandmultiwayaggregation}), \updateX{} cannot be invoked in parallel on tuples contributing to the same window and sharing the same $K$ value (or when no group-by parameter is defined) for order-sensitive functions.
This restriction can be relaxed for order-insensitive functions, since the result of a window would not be affected by the order in which concurrent threads update it.
Since the aggregate operator defines a single output stream to which tuples are added in timestamp order, we do not take into account the parallelization of the Output function.

\paragraph*{Utilizing concurrent data structures}

A core challenge in the parallelization of the pipeline stages of multiway aggregation is the ``{\em balancing act}" \cite{Michael:2013} of maximizing their concurrency while ensuring consistency and correct synchronization.
To this end, the key-enablers that can address such challenge are the data structures, seen as articulation points between such stages, as well as their efficient algorithmic implementations.
Efficient implementations of data structures often employ fine-grain synchronization that can avoid the use of waiting or locking.
Lock-free data structures have been shown to increase applications' throughput and are part of the Java and \verb!C#! standard libraries.
The correctness of such implementations is commonly  shown through \emph{linearizability}~\cite{herlihyandwing}, which guarantees that, given a history of concurrent operations, there exists a sequential ordering of them, consistent with their real-time ordering and with the sequential semantics of the data structure.

In order to parallelize the \smergeX{} and \updateX{} stages, we first explored which existing concurrent data structures could be used to sort input tuples at insertion time.
In principle, tree-like data structures could provide concurrent logarithmic-time insertion operations.
The need for easily extracting such tuples in timestamp order would be better addressed by a concurrent skip list (e.g., the one proposed by~\cite{sundell_fast_2005}) due to its underlying list-like node structure of sorted elements with shortcuts allowing for fast insertion (cf. Sec.~\ref{subsec:skiplists} for more details on skip lists).
Nevertheless, a skip list would not differentiate between tuples that are \smergeable{} and tuples that are not.
Because of that, checking whether a tuple is \smergeable{} or not would still be penalized by a cost that is linear to the number of inputs, as is the case for the multi-queue implementations.
Based on this observation, while leveraging the skip list's multi-level shortcuts mechanism (allowing for a logarithmic find of the insert position in the list), we propose  new concurrent shared data object types that better fit the parallelization challenges proper of multiway aggregation.
We complement the qualitative estimation of the reasons that motivated the design and implementation of new concurrent data structures by comparing them with a lock-free skip list in Section~\ref{sec:evaluation}.

\section{New abstract data types and aggregate designs}
\label{sec:newabstractdatatypesandaggregatedesigns}

\begin{table}[t]
\caption{Methods supported by the data structures}
\label{tab:api}
\scriptsize
\centering
\begin{tabular}{|p{2.7cm}p{3cm}|p{2.7cm}p{3cm}|}
\hline
{\bf \tuplist} & & {\bf \winlist} &\\
\hline
\texttt{insertTuple}(tuple, input) & Inserts a \emph{tuple} from \emph{input} stream in sorted order. & \texttt{updateWindows}(tuple, thread) & Updates the windows that the tuple contributes to. \\
\hline
\texttt{getNextReadyTuple}() & Returns (once and only once) the earliest \smergeable{} tuple (cf. definition~\ref{def:ready}). & \texttt{getNextWinSet}() & Returns (once and only once) the earliest \winset to which tuples do not contribute anymore. \\
\hline
\end{tabular}
\end{table}

\begin{figure}
  \centering
  \includegraphics*[width=1\linewidth]{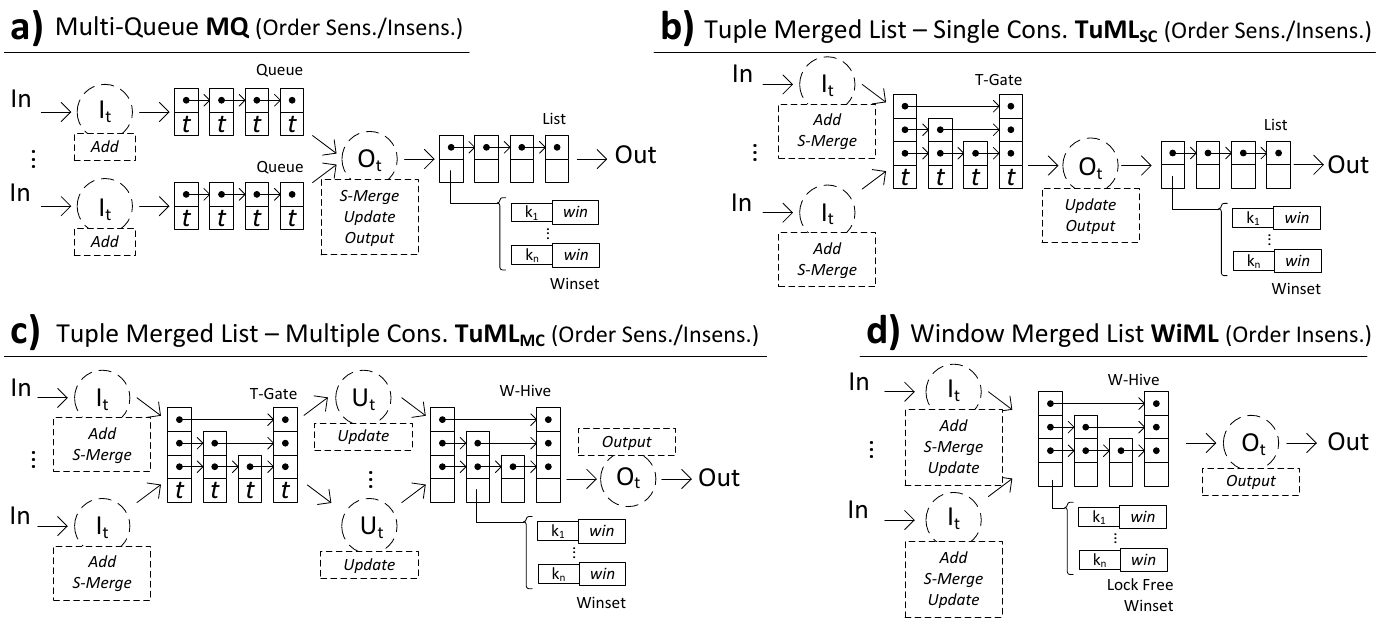}
  \caption{Overview of aggregate designs.}
  \label{fig:implementations}
\end{figure}

This section overviews our enhanced aggregate operators.
For all of them, one input thread per stream, $I_t$, fetches tuples from its respective input stream while a single output thread, $O_t$, forwards output tuples.
Figure~\ref{fig:implementations} presents the different designs and describes how operations are assigned to threads.
While presenting the different designs, we discuss the data structures needed to maintain tuples and \textit{\winsets}, and introduce our concurrent data structures and their APIs (Table~\ref{tab:api}), with the functionality of each method.

\paragraph*{Tuple Merged List - Single Consumer (\TUMLSC)}
This algorithmic design \\ (Fig.~\ref{fig:implementations}b) addresses the first parallelization challenge by performing both \addX{} and \smergeX{} in parallel.
\TUMLSC{} relies on the Tuple-Gate (\tuplist{}), a concurrent data structure whose API provides two methods, whose definitions are given in Table~\ref{tab:api}.
Method \texttt{insertTuple}(tuple, input) allows for tuples to be inserted, while being merged and sorted, by multiple input streams in parallel.
Method \texttt{getNext\allowbreak Ready\allowbreak Tuple}() guarantees that no tuple is returned for processing before it is \smergeable{}.
The T-Gate stores an ordered list of tuples.
By keeping track of the latest added tuple from each input stream, the method can quickly check if the first tuple in the list is \smergeable{}.
The output thread reads sorted and \smergeable{} tuples from the T-Gate and performs the \updateX{} and \outputX{} stages.

\paragraph*{Tuple Merged List - Multiple Consumer (\TUMLMC{})}

\TUMLMC (Fig.~\ref{fig:implementations}c) extends the \TUMLSC, addressing the second parallelization challenge by performing also the \updateX operation in parallel.
Multiple update threads, $U_t$, get \smergeable{} tuples from \tuplist{} concurrently by invoking \texttt{getNextReadyTuple}() and update the windows to which each tuple contributes to.
Thus, \winset{}s are now accessed and updated concurrently by the $U_t$ threads.
For managing the \winset{}s and synchronizing such access we introduce a second data structure, that we refer to as Window-Hive (\winlist{}).

As the \tuplist{} encapsulates the logic to differentiate between tuples that are \smergeable{} or not, the \winlist{} is able to differentiate between the \winset{}s to which incoming tuples are still contributing  and the ones whose results can be outputted.
It provides two methods: \texttt{updateWindows}(tuple, thread) allows for multiple threads to synchronize and safely create and update active \winset{}s while \texttt{getNext\allowbreak WinSet}() returns the earliest \winset{} no longer being updated by any thread.
This method is invoked by the output thread $O_t$, in charge of forwarding the operator's output tuples.
W-Hive uses similar techniques as the T-Gate to quickly find the right location of where to insert a new \winset{}.
To preserve the correctness of order-sensitive functions, each update thread is responsible for a distinct subset of the group-by parameter values $K$.
For this implementation, the number of update threads can be chosen by the user.

\paragraph*{Window Merged List (\WIML)}
This design (Fig.~\ref{fig:implementations}d) further enhances the parallelization of the aggregate's stages for order-insensitive functions.
Operations \addX{}, \smergeX{} and \updateX{} are performed in parallel by the $I_t$ threads.
Since \WIML{} is designed for order-insensitive functions, input tuples do not need to be sorted before being processed to update the windows they contribute to. The required synchronization needed to ensure that output tuples for a given \winset{} are outputted only after all its contributing tuples have been processed, is managed by the \texttt{getNextWinSet}() method provided by the \winlist{}.

\section{Aggregates and data structures implementations}\label{sec:aggregatesanddatastructuresimplementations}
In this section we present in detail the aggregate algorithmic implementations and their supporting data structures.
The pseudocode for the baseline aggregate implementation can be found in Algorithm~\ref{lst:sequential}, 
while the enhanced implementations in Algorithms~\ref{lst:lfcore} and~\ref{lst:combined}.
The supporting data structures are presented in Algorithms~\ref{lst:tuplesl} and~\ref{lst:winsl}.
Methods' names have been chosen according to which of the four main aggregate's stages specified in Section~\ref{sec:rethinking} they implement.
In the following, the group-by value of each tuple is accessed as \texttt{tuple.key}. If no group-by parameter is defined, it is safe to assume all tuples will refer to the same \texttt{key} value (e.g., \texttt{null}).

\subsection{Preliminaries}\label{subsec:skiplists}
A skip list~\cite{pugh_skip_1990} is a data structure that maintains elements in an ordered list and supports probabilistically logarithmic search, insertion and deletion operations.
Essentially, a skip list can be viewed as a traditional linked list where each node, besides the 
usual pointer connecting to the next element, has a tower of several pointers that shortcut over the next elements and connect to nodes later in the ordered list. 
The height of the nodes is randomly distributed so that $50\%$ of the nodes have height 1, $25\%$ of them have height 2 and so on.
Thus, the higher the level traversed, the sparser the links are and more nodes are skipped.
The basic search routine for a key $k$, is to traverse from the highest level shifting to a lower one every time the current node's key is greater than $k$.

One of the main benefits of skip lists over standard tree like data structures is that regardless of the data and operation distribution there is no need for rebalancing. 
This has made it a good candidate for parallel and concurrent implementations, as the one by~\cite{sundell_fast_2005}, since rebalancing will typically require expensive synchronization in tree-based implementations.

\subsection{Common components}
\lstset{emph={processTuple,produceOutTuple},emphstyle=\bfseries}
\begin{algorithm}

\begin{lstlisting}[language=Java, name=code]
interface Window
	void processTuple(tuple)	// update variables @\label{line:windowadd}@
	Tuple produceOutTuple() // produce output tuple @\label{line:genoutput}@

class SumWindow : Window
	int sum = 0
	void processTuple(tuple)
		sum += tuple.$A_i$
	Tuple produceOutTuple()
		return Tuple(sum)
\end{lstlisting}
\caption{Generic window interface and concrete implementation of a window that sums the value of attribute $A_i$.}
\label{lst:window}

\end{algorithm}
The base component of the aggregate operator is the \texttt{Window}.
It represents a time interval and provides functionality to aggregate the tuples that contribute to it.
Algorithm~\ref{lst:window} shows the window interface and the implementation of a sample \texttt{sum} aggregation.

As discussed in Section~\ref{sec:rethinking}, the \textit{\winset} can be implemented as a hash table to easily support an arbitrary number of windows and to locate them quickly given the tuple's group-by parameter K.
In the \WIML and \TUMLMC implementations, the \winset is accessed by multiple threads concurrently.
In data streaming applications the domain of keys for the group-by parameters are typically known in advance. 
Thus, the use of a closed addressing hash table is appropriate; specifically, in our algorithmic implementation we are using the lock-free, linearizable concurrent hash table by Michael~\cite{michael2002}, mainly due to its implementation simplicity.
Alternatively, open addressing schemes like lock-free cuckoo hashing~\cite{nguyen_lock-free_2014} can be used.
Furthermore, the hash table (\textit{\winset}) is a building a block in another lock-free and linearizable data structure (\winlist).
Therefore, we avoided designs based on blocking implementations~\cite{herlihy_hopscotch_2008} or
explicit hardware support~\cite{li_algorithmic_2014}.
Shun and Blelloch~\cite{shun_phase-concurrent_2014} present a high performing phase-concurrent hash table. In this model only operations of the same type proceed concurrently, which is a limitation in the winset use-case since insert and find operations may be concurrent.
For the \MQ{} and \TUMLSC{} implementations, since the access to the \winset is sequential, a sequential implementation of a hash table is sufficient.

\subsection{Baseline implementations - \MQ}
\lstset{emph={processTuple,Add,insert,SMergeUpdateOutput,produceOutTuple},emphstyle=\bfseries}
\begin{algorithm}
\begin{lstlisting}[name=code]
Add(tuple, input) // One thread per input
	queue$_{input}$.enqueue(tuple) @\label{line:queueadd}@

SMergeUpdateOutput() // One thread
	if($\exists$i : queue$_i$.isEmpty()) return @\label{line:fromall}@
	input = v:($\forall$i:queue$_v$.peek().ts$\leq$queue$_i$.peek().ts) @\label{line:lowestbegin}@
	tuple = queue$_{input}$.dequeue() @\label{line:lowestend}@
	upout(tuple)

upout(tuple) @\label{line:upout}@ 
	windowTSs = getTargetWindowTSs(tuple)
	while(windowlist.first().ts<windowsTSs.first()) @\label{line:readytoforwardbegin}@
		winset = windowlist.removeFirst()
		for (window : winset)
			forward(window.produceOutTuple()) @\label{line:readytoforwardend}@ // See @\lref{lst:window}{line:genoutput}@
	for (wts : windowTSs)
		if(!windowlist.contains(wts)) @\label{line:nohash}@
			windowlist.insert(wts, new WinSet(wts))
		win = windowlist.find(wts).find(tuple.key)
		if (win == null) @\label{line:nowin}@
			win = new Window()
			windowlist.find(wts).put(tuple.key, win)
		win.processTuple(tuple) @\label{line:addtuplemq}@ // See @\lref{lst:window}{line:windowadd}@
\end{lstlisting}
\caption{MQ}
\label{lst:sequential}
\end{algorithm}
This baseline implementation is based on the one used in SPEs such as Borealis \cite{balazinska2008fault} or StreamCloud \cite{gulisano2012streamcloud}.
The multi-queue design consists of two main methods (see Algorithm~\ref{lst:sequential}).
The \texttt{Add} method is used to deliver tuples to the aggregate and placing them in their respective input queue (L\ref{line:queueadd}).
The queues are protected by a lock to allow concurrent access.

The main work is performed by the second method, \texttt{SMerge\allowbreak Update\allowbreak Output}.
It checks all the queues to make sure a tuple has been received from each input (L\ref{line:fromall}).
It then reads the tuple with the lowest timestamp among the inputs (L\ref{line:lowestbegin}-\ref{line:lowestend}).
This guarantees that all tuples will be read in timestamp order.

The currently active \winset{}s are stored in a linked list.
The method \\ \texttt{getTargetWindowTSs} creates a list, \texttt{windowTSs}, of the starting timestamps of the windows that the tuple contributes to.
If the starting timestamp of a \winset in the window list is lower than the earliest timestamp in \texttt{windowTSs},
the aggregated results of the former can be outputted (L\ref{line:readytoforwardbegin}-\ref{line:readytoforwardend}).
This is safe since all future tuples will have an equal or higher timestamp and will not contribute to the \winset.
If the new tuple contributes to a time interval that does not have a corresponding \winset yet, the \winset is created and added to the list (L\ref{line:nohash}).
If the window does not exist for the tuple's key, it is also created (L\ref{line:nowin}).
Finally, the window processes the tuple (L\ref{line:addtuplemq}).

Figure~\ref{fig:example_mq} presents how stages \addX{}, \smergeX{}, \updateX{} and \outputX{} (and their respective code lines) are distributed to threads $I_t$ and $O_t$ for the \MQ{} implementation (stages assigned to $I_t$ and $O_t$ threads are colored in blue and red, respectively).

\begin{figure}[ht!]
  \centering
  \includegraphics*[width=1\linewidth]{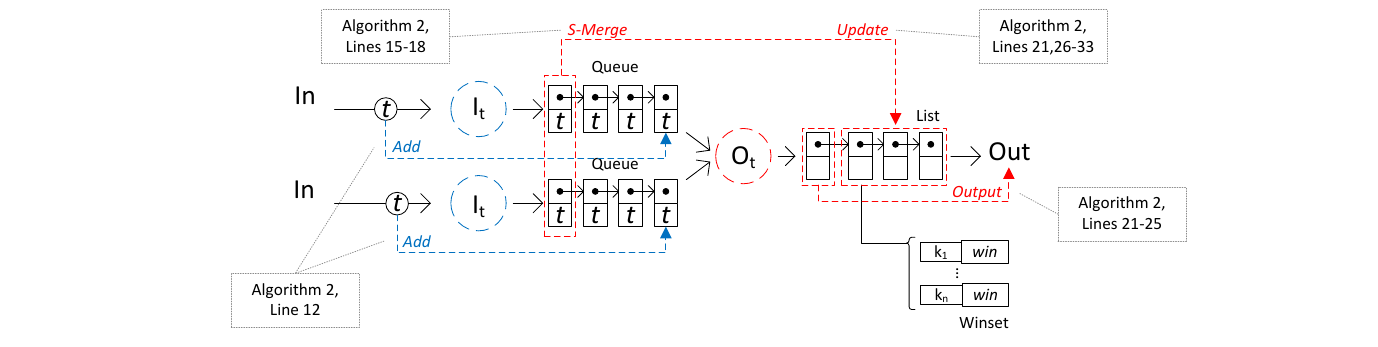}
  \caption{Visual representation of how stages (and their respective code lines) are distributed to threads for the \MQ{} implementation.}
  \label{fig:example_mq}
\end{figure}

\subsection{\TUMLSC{} and \TUMLMC{}}
\lstset{emph={UpdateOutput,getNextWinSet,processTuple,AddSMerge,insertTuple,getNextReadyTuple,updateWindows,Update,Output,processedTS,produceOutTuple},emphstyle=\bfseries}
\begin{algorithm}
\begin{lstlisting}[name=code, breaklines=true]
AddSMerge(tuple, input) // One thread per input
	tgate.insertTuple(tuple, input) // See @\lref{lst:tuplesl}{line:tladdop}@

UpdateOutput() @\label{line:updateoutput}@ // @\TUMLSC@ only
	tuple = tgate.getNextReadyTuple() // See @\lref{lst:tuplesl}{line:tlreadop}@ @\label{line:updateoutputread}@
	windowTSs = getTargetWindowTSs(tuple)
	// Produce results for windows no longer updated
	while(windowlist.first().ts<windowsTSs.first())
		winset = windowlist.removeFirst()
		for (window : winset)
			forward(window.produceOutTuple())
	// Update windows
	for (wts : windowTSs)
		if(!windowlist.contains(wts))
			windowlist.insert(wts, new WinSet(wts))
		win = windowlist.find(wts).find(tuple.key)
		if (win == null) 
			win = new Window()
			windowlist.find(wts).put(tuple.key, win)
		win.processTuple(tuple)

Update() // Multiple threads - @\TUMLMC@ only
	tuple = tgate.getNextReadyTuple() // See @\lref{lst:tuplesl}{line:tlreadop}@
	if (tuple == null) return
	if($\lnot$tuple.hashToThread(threadid)) return
	whive.updateWindows(tuple) // See @\lref{lst:winsl}{line:wladdop}@

Output() // One thread - @\TUMLMC@ only
	winset = whive.getNextWinSet() // See @\lref{lst:winsl}{line:wlreadop}@
	if(winset == null) return
	for(window : winset)
		// Output the result of the window
		forward(window.produceOutTuple())
\end{lstlisting}
\caption{\TUMLSC, \TUMLMC{}}
\label{lst:lfcore}
\end{algorithm} 
These aggregate designs rely on the \tuplist data structure (API in Table~\ref{tab:api} and further description in Section~\ref{sec:tuplist}).
The \tuplist is used to pre-sort all arriving tuples and merge them into one stream.
In contrast with the \MQ{} implementation, the \smergeX{} operation is now executed at the first stage in the pipeline.

\TUMLSC uses a single thread to read the sorted tuples from the \tuplist, update the windows, and output the aggregated results. This is done using the method \texttt{UpdateOutput}, which shares much functionality with the MQ design (L\ref{line:updateoutput}).
\TUMLMC allows multiple threads to read from the \tuplist and update the windows in parallel.
This requires support for concurrent handling of the \winset{}s. The \winlist (API in Table~\ref{tab:api} and further description in Section~\ref{sec:windowlist}) is used to provide lock-free \winset management.
If the aggregate opertor's function is order-sensitive (e.g., forward only the first received tuple), tuples contributing to the same window cannot be processed in parallel by multiple threads. Hence, a hash function based on the group-by attribute is used to assign input tuples to existing threads.

\subsection{\tuplist}
\label{sec:tuplist}
\lstset{emph={processed, insertTuple,getNextReadyTuple,processedTS,initializeTGate,levelinsert},emphstyle=\bfseries}
\begin{algorithm}
\begin{lstlisting}[name=code, breaklines=true]
Node head, update[maxlevels] // Thread local variables; maxlevels is a constant parameter

def Node
	Node next[maxlevels]
	Tuple tuple
	int input

initializeTGate() @\label{line:tgate_init}@
  tail = new Node()
  tmp = new Node()       // tmp is the temporary head
  for (i=0 to maxlevels-1) // all levels point to tail
    tmp.next$_i$ = tail 
  for (i in input ids)
    insertTuple(new Tuple(), i) //insert one dummy tuple per input

getNextReadyTuple() @\label{line:tlreadop}@
	next = head.next$_0$
	if(next$\neq$tail $\land$ written$_{next.input}\neq$next.tuple) @\label{lp:tgate_getNextReady}@
		head = next
		return next.tuple
	return null

insertTuple(tuple, input) @\label{line:tladdop}@
	nodeheight = getLevelHeight() @\label{line:getlevels}@
	newnode = new Node(tuple, input)
	curnode = update$_{maxlevels-1}$ @\label{line:lastseen}@
	for(i=maxlevels-1 downto 0) @\label{line:updatestart}@
		next = curnode.next$_i$
		while(next$\neq$tail $\land$ next.ts<tuple.ts)
			curnode = next
			next = curnode.next$_i$
		update$_i$ = curnode @\label{line:updateend}@
	for(i=0 to nodeheight)
		levelinsert(update$_i$, newnode, tuple.ts, i) @\label{line:levelins}@
	written$_{input}$ = newnode @\label{line:written}@
	
levelinsert(priornode, newnode, ts, level)
	while(true)
		next = priornode.next$_{level}$
		if(next==tail $\lor$ next.ts>ts)
			newnode.next$_{level}$ = next
			if(CAS(priornode.next$_{level}$, next, newnode)) break @\label{line:casLevelInsert}@
		else fromNode = next
\end{lstlisting}
\caption{\tuplist}
\label{lst:tuplesl}
\end{algorithm}
The \tuplist data structure (see Algorithm~\ref{lst:tuplesl}) maintains a merged, timestamp ordered
list of the tuples coming from the input streams.

The \texttt{insertTuple} method inserts a tuple at its correct position in the list, given its timestamp.
First the number of shortcut levels is decided (L\ref{line:getlevels}), according to the standard skip list distribution~\cite{sundell_fast_2005}.
Each thread keeps in the \texttt{update} array a pointer to the last accessed node in each level (L\ref{line:lastseen}).
Since all new tuples added by the same thread will have an equal or higher timestamp than the last inserted one,
this lowers the number of nodes that a thread has to examine.

Once all the levels are searched,the node is inserted on each level it should be part of with the use of the \texttt{levelinsert} helper method (L\ref{line:levelins}).
This method verifies that the conditions for the prior node in each level still apply, otherwise (in case some newer node has been inserted in between) it traverses the current level of the list until the right position is found.
The node is then inserted by using the compare and swap (CAS) atomic instruction. In case of failure, i.e. when another thread achieves an insertion at the same place, the loop retries the search.
When the node has been inserted, the \texttt{written} array is updated to hold a reference to the new node (L\ref{line:written}).
The index into the array is the input stream id. This is done to make sure a tuple is not read until we have received
a new tuple with a higher or equal timestamp from all the other input streams.

The \texttt{getNextReadyTuple} method traverses the lowest level of the list to return tuples in timestamp order. 
Each thread keeps its local head pointer having its own handle to the list and advances this pointer in each successful call.
A tuple pointed by the head can be returned if it is not the last one added by any input stream (the latter ensures that if a tuple is returned, it is indeed \smergeable{}).
It is useful to point out that in the case of just one tuple per input stream being present in the data structure, according to Definition~\ref{def:ready}, the tuple with the smallest timestamp is \smergeable.
However, the presented implementation will not return this tuple until another one with higher timestamp arrives from the same input stream.
This is done for implementation simplicity, since it does not compromise the correctness according to Def.~\ref{def:ready}, and does not affect the high input rate scenarios which we focus in this paper.

Finally, during the initialization of the data structure (L\ref{line:tgate_init}), one dummy tuple per input is inserted to ensure the correct semantics of the \texttt{getNextReadyTuple} are preserved until all input streams start delivering tuples.

Nodes can be freed when they are no longer accessible (directly or indirectly) from the thread local \texttt{head} and \texttt{update$_{maxlevels-1}$}.
For this reason, several memory reclamation techniques such as hazard pointers can be applied~\cite{michael_hazard_2004,sundell_fast_2005},
while also garbage collection can be exploited.
In the Java based implementation of our prototype that is evaluated in Section~\ref{sec:evaluation}, we rely on the default garbage collector.

\subsection{\winlist}
\label{sec:windowlist}
\lstset{emph={processTuple,char,Add,SMergeUpdateOutput,AddSMerge,updateWindows,getNextWinSet,peek,AddSMergeUpdate,Update,Output,processedTS,produceOutTuple},emphstyle=\bfseries}
\begin{algorithm}
\begin{lstlisting}[name=code, breaklines=true]
Node readhead, inserthead, tail

def Node
	Node next[maxlevels]
	Timestamp ts
	WinSet winset

getNextWinSet() @\label{line:wlreadop}@
	if(readhead.next$_0$==tail) return null @\label{line:lp1_whive}@
	if(readhead.next$_0$.ts$\in$written) return null@\label{line:canforward}@
	readhead = readhead.next$_0$	@\label{line:lp2_whive}@
	if(readhead.levels==maxlevels)
		inserthead = readhead
	return readhead.winset

updateWindows(tuple, thread) @\label{line:wladdop}@
	windowTSs = getTargetWindowTSs(tuple)
	written$_{thread}$ = windowTSs.first() @\label{line:wstarts}@
	for(wints : windowTSs)
		curnode = inserthead
		for(i = maxlevels-1 downto 0) @\label{line:wupdatestart}@
			next = curnode.next$_i$
			while(next!=tail $\land$ next.ts $\leq$ wints)
				curnode = next
				next = curnode.next$_i$
			update$_i$ = curnode @\label{line:wupdateend}@
		if(curnode.ts != wints)
			winset = new WinSet(wints)
			levels = getLevelHeight()
			newnode = new Node(wints, winset)
			curnode = levelinsert(update$_0$, newnode, wints, 0)
			if(curnode == newnode)
				for(i=1 to levels-1)
					levelinsert(update$_i$, newnode, wints, i)
		win = curnode.winset.find(tuple.key) @\label{line:hashsearch}@
		if(win==null)
			win = new Window()
			curnode.winset.put(tuple.key, win) @\label{line:insertwin}@
		win.processTuple(tuple) @\label{line:processTuple}@

levelinsert(priornode, newnode, wints, level)
	while(true)
		next = priornode.next$_{level}$
		if(level == 0 $\land$ next.ts == wints) return next @\label{line:otheraddht}@
		if(next == tail $\lor$ next.ts > wints)
			newnode.next$_{level}$ = next
			if(CAS(priornode.next$_{level}$, next, newnode) break @\label{line:casLevelInsertWhive}@
		else fromNode = next
	return newnode
\end{lstlisting}
\caption{\winlist}
\label{lst:winsl}
\end{algorithm}
The \winlist (cf. Algorithm~\ref{lst:winsl}) data structure provides lock-free management of \winset{}s.
The \texttt{updateWindows} method adds a tuple to each window it contributes to.
A reference to the earliest such window is saved in the \texttt{written} array for each thread (L\ref{line:wstarts}).
This is used to keep track of when \winset{}s are no longer being updated (L\ref{line:canforward}).
For each window the tuple contributes to, the method traverses the list to locate the \winset with the same timestamp as the window.
This is done in the same manner as when inserting a node into the \tuplist (L\ref{line:wupdatestart}-\ref{line:wupdateend}).
If the \winset is found, it is searched to find the correct window for the tuple's key (L\ref{line:hashsearch}).
If there is no window for the key, a new window is inserted into the \winset with the correct key (L\ref{line:insertwin}).
The tuple is then added to the window.
If no \winset is found for the timestamp, a new \winset and corresponding node
to hold it are created. They are inserted into the list in a similar manner to the \tuplist.
The difference is that another thread might try to create a \winset for the same timestamp concurrently.
If this happens and the other thread manages to insert it, then the insertion must be canceled and the other \winset will be used instead (L\ref{line:otheraddht}).

The \texttt{getNextWinSet} operation returns the next \winset that is no longer being updated by input tuples. It is assumed that it will only be called by a single thread.
If no thread updated any of the windows in the first \winset of the list the last time it received a tuple, it can be assumed that no more tuples will contribute to the \winset in the future, as each thread receives tuples in timestamp order (L\ref{line:canforward}).
If the new head node for the \texttt{getNextWinSet} operation is part of all shortcut levels, it is made the new head node for the \texttt{updateWindows} method.
Nodes and \winset{}s with a timestamp lower than the ones referenced by \texttt{inserthead} and \texttt{readhead} can be safely freed or automatically garbage collected.

Figure~\ref{fig:example_tumlsc} presents how stages \addX{}, \smergeX{}, \updateX{} and \outputX{} (and their respective code lines) are distributed to threads $I_t$ and $O_t$ for the \TUMLSC{} implementation (stages assigned to $I_t$ and $O_t$ threads are colored in blue and red, respectively). Similarly, Figure~\ref{fig:example_tumlmc} shows how such stages are distributed to threads $I_t$, $U_t$ and $O_t$ for the \TUMLMC{} implementation (stages assigned to $I_t$, $U_t$ and $O_t$ threads are colored in blue, red, and yellow, respectively).

\begin{figure}[ht!]
  \centering
  \includegraphics*[width=1\linewidth]{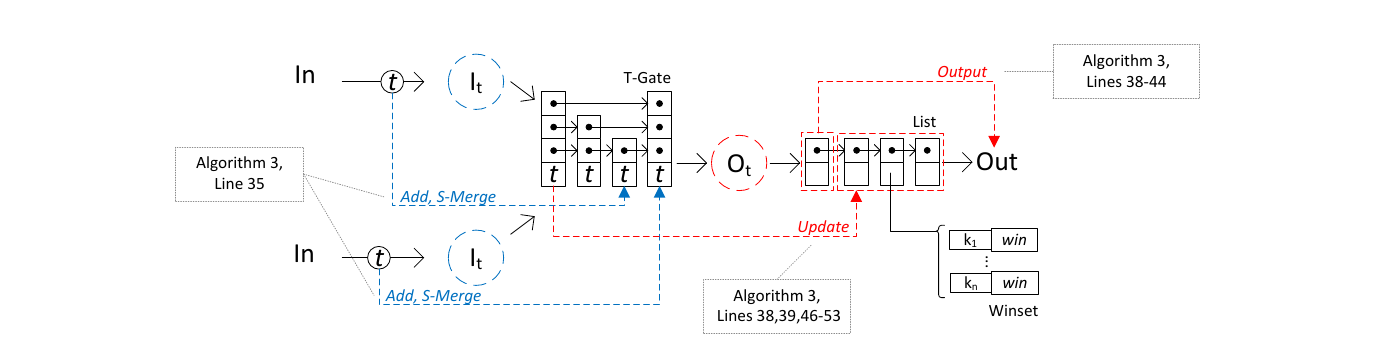}
  \caption{Visual representation of how stages (and their respective code lines) are distributed to threads for the \TUMLSC{} implementation.}
  \label{fig:example_tumlsc}
\end{figure}
\begin{figure}[ht!]
  \centering
   \includegraphics*[width=1\linewidth]{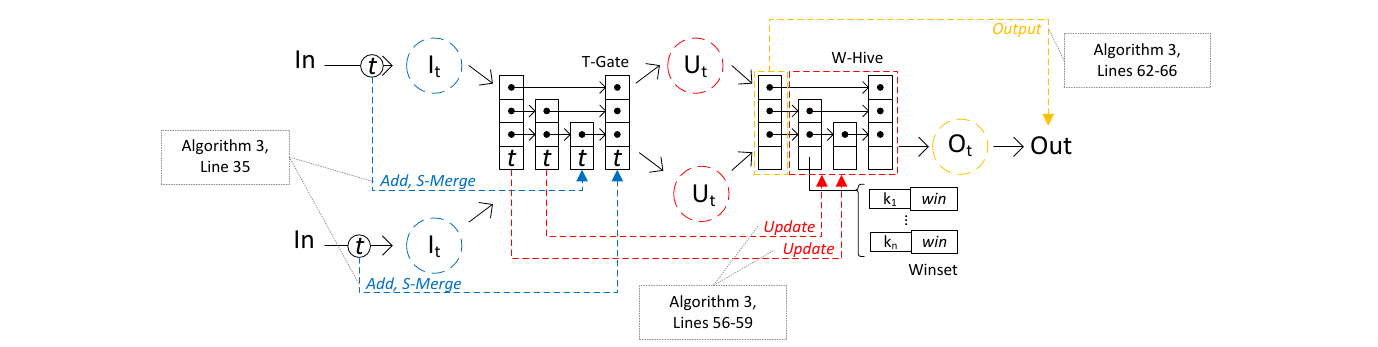}
  \caption{Visual representation of how stages (and their respective code lines) are distributed to threads for the \TUMLMC{} implementation.}
  \label{fig:example_tumlmc}
\end{figure}

\subsection{WiML}
\label{sec:WiML}
\begin{algorithm}
\begin{lstlisting}[name=code, breaklines=true]
AddSMergeUpdate(tuple, input) // One thread per input
	whive.updateWindows(tuple, input)  @\label{line:combaddtowin}@ // See @\lref{lst:winsl}{line:wladdop}@

void Output() // One thread
	winset = whive.getNextWinSet() // See @\lref{lst:winsl}{line:wlreadop}@
	if(winset == null) return
	for(window : winset)
		// Output the result of the window
		forward(window.produceOutTuple()) @\label{line:forwardwin}@
\end{lstlisting}
\caption{\WIML{}}
\label{lst:combined}
\end{algorithm}
The WiML design (see Algorithm~\ref{lst:combined}) is suitable only for aggregate operator's functions $F$ that are 
order-insensitive, since it does not sort the tuples prior to inserting them into their windows.
When a tuple arrives it is immediately processed to update the windows it contributes to.
This is done in the \texttt{AddSMergeUpdate} method using the \winlist (L~\ref{line:combaddtowin}).
The \winlist returns the \winset{}s that will no longer be contributed to, which can then be forwarded (L\ref{line:forwardwin}).

Figure~\ref{fig:example_wiml} presents how stages \addX{}, \smergeX{}, \updateX{} and \outputX{} (and their respective code lines) are distributed to threads $I_t$ and $O_t$ for the \WIML{} implementation (stages assigned to $I_t$ and $O_t$ threads are colored in blue and red, respectively).

\begin{figure}[ht!]
  \centering
  \includegraphics*[width=1\linewidth]{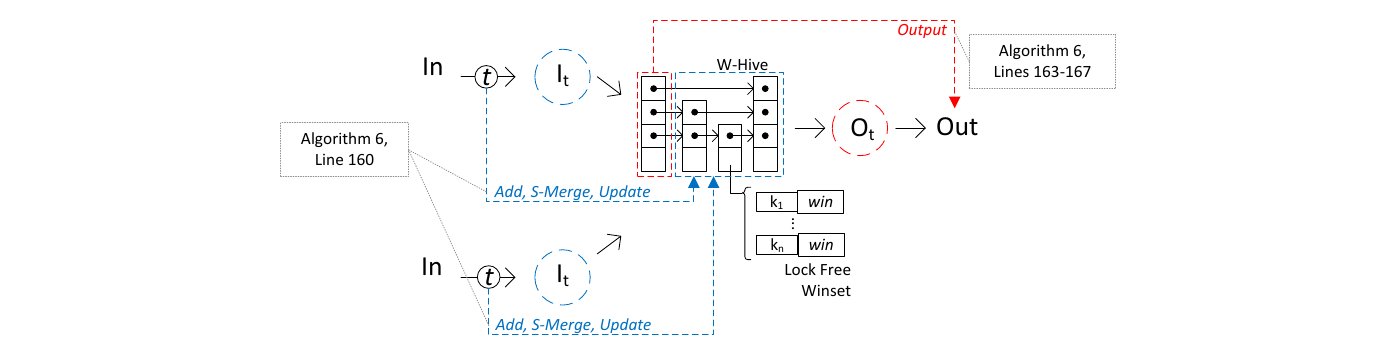}
  \caption{Visual representation of how stages (and their respective code lines) are distributed to threads for the \WIML{} implementation.}
  \label{fig:example_wiml}
\end{figure}

\section{Correctness}\label{sec:correctness}
In this section we outline proofs of liveness and safety properties of the algorithmic constructions of the data structures, namely lock-freedom and linearizability.
\emph{Lock-freedom} guarantees that at least one of the concurrent method call invocations of the data structure will return in a finite number of its own steps~\cite{herlihy_art_2012}.
\emph{Linearizability}~\cite{herlihyandwing} guarantees that every method call appears to take effect at some point (linearization point) between its invocation and response; more formally, for a linearizable implementation of a data structure, given a history of concurrent operations, there exists a sequential ordering of them, consistent with their real-time ordering and with the sequential semantics of the data structure.
Furthermore, we show that the aggregate implementations provide deterministic processing of the stream tuples. 

\begin{customthm}{1}
 The \tuplist implementation presented in algorithm~\ref{lst:tuplesl} is lock-free and linearizable.
\end{customthm}

\begin{proof}
 The \texttt{getNextReadyTuple} method does not contain any loops and returns in a bounded number of its own steps.
 The \texttt{insertTuple} method contains bounded loops except for the  \texttt{levelinsert} subroutine. This will fail to terminate only if the CAS  instruction on L\ref{line:casLevelInsert} fails, i.e. in the case a concurrent  call of \texttt{insertTuple} from another thread makes progress.
 Therefore, the \tuplist implementation is lock-free. 
 
The linearization point of the \texttt{insertTuple} method during concurrent calls  of the same method, is the successful CAS  on L\ref{line:casLevelInsert}, as this is when the operation appears to take effect among such calls.
 
\texttt{getNextReadyTuple} is linearized at the check on L\ref{lp:tgate_getNextReady}
 when the appropriate cell of the \texttt{written} array is read.
 In the case of concurrent calls of methods \texttt{getNextReadyTuple} and \texttt{insertTuple}, 
 the linearization point of the latter is the update of the \texttt{written} array on L\ref{line:written}. 
 Thus there is a linearization point for all the method calls of the \tuplist implementation.
\end{proof}

The \winlist provides management of winsets and is where the actual aggregate computation takes place.
The \texttt{processTuple} call on L\ref{line:processTuple} is an application-specific operation.
Naturally, the safety and liveness properties of the \\\texttt{processTuple} method call affect the ones of the higher level \texttt{update\-Windows} method that includes the former.
Thus, the following theorems are shown under the condition that the \texttt{processTuple} call on L\ref{line:processTuple} is linearizable and lock-free (e.g. local computation in the simplest case).

\begin{customthm}{2}
 The \winlist implementation presented in algorithm~\ref{lst:winsl} is lock-free and linearizable.
\end{customthm}
\begin{proof}
 By definition there are no concurrent calls of \texttt{getNextWinSet}, each such call does not modify any shared variables and returns in a bounded number of its own steps.
 A call to \texttt{update\-Windows} will fail to return only if the CAS instruction on L\ref{line:casLevelInsertWhive} fails (i.e., if a concurrent call from another thread will have made progress).
 In the case of concurrent updates of tuples with the same key, the \texttt{winsets} used are lock-free and linearizable, thus so are all the calls to their methods.
 Therefore, the \winlist implementation is lock-free.
 
 For concurrent calls to the \texttt{updateWindows} method we distinguish two cases: the ones that successfully add a new winset and the respective holding node to the \winlist and the ones that update an existing winset.
 For the latter, the linearization point breaks down to the successful find of the window to be updated in the winset (L\ref{line:hashsearch}), or the insertion of the respective window in case this does not exist (L\ref{line:insertwin}).
 For the former, the linearization point is the successful CAS instruction on L\ref{line:casLevelInsertWhive}.
 A call to \texttt{updateWindows} concurrent with a call to \texttt{getNextWinSet} is linearized on L\ref{line:wstarts}, as this could affect the result of a subsequent check on L\ref{line:canforward} of \texttt{getNextWinSet}.
 The linearization point of \texttt{getNextWinSet} can be any of L\ref{line:lp1_whive} or L\ref{line:canforward} depending on the successful checks, or L\ref{line:lp2_whive} otherwise.
 Thus, there is a linearization point for all the method calls of the \winlist implementation.
 \end{proof}

\begin{mylemma}\label{lem:tgate_ready}
 A tuple $t_i^j$ returned by the \texttt{getNextReadyTuple} method, satisfies the ready definition (cf. Def.~\ref{def:ready}).
\end{mylemma}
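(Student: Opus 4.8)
The plan is to reduce, via Theorem~1 (linearizability of the \tuplist), to a sequential execution of the \tuplist's operations, and then to establish for such an execution the following invariant linking any reader thread's local pointer \texttt{head} to the shared array \texttt{written}: at every point, for every input $k$, $\mathtt{written}_k.ts \ge \mathtt{head}.ts$. Here $\mathtt{written}_k$ is the node recording the latest tuple handed to \texttt{insertTuple} from input $k$ (always non\nobreakdash-null, because \texttt{initializeTGate} inserts one dummy tuple per input at L\ref{line:tgate_init}; a dummy is regarded as a received tuple of minimal timestamp), and the sentinel that \texttt{head} starts at is given timestamp $-\infty$. Granting the invariant, the lemma is immediate: when a call to \texttt{getNextReadyTuple} returns $t = t^j_i$ it has just executed $\mathtt{head}:=\mathtt{next}$ with \texttt{next} the node holding $t$, so right afterwards $\mathtt{written}_k.ts \ge \mathtt{head}.ts = t.ts$ for every input $k$; since $\mathtt{written}_k$ is (the node of) a tuple delivered by input $k$ (to return a non\nobreakdash-dummy tuple \texttt{head} must have passed every dummy, so by then every input has delivered; the dummies cover the start-up case), we get $\max_l t^k_l.ts \ge t.ts$ for all $k$, hence $t.ts \le \mergets = \min_k\{\max_l t^k_l.ts\}$, which is exactly Definition~\ref{def:ready}.

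The invariant is then proved by induction over the steps of the sequential execution, the only relevant ones being an \texttt{insertTuple}$(\cdot,k)$ and an advance $\mathtt{head}:=\mathtt{next}$ in \texttt{getNextReadyTuple}. An \texttt{insertTuple}$(\cdot,k)$ sets $\mathtt{written}_k$ (at L\ref{line:written}) to the just-linked node, whose timestamp is $\ge$ the previous $\mathtt{written}_k.ts$ because stream $k$ is served by a single thread and is timestamp-ordered, and no \texttt{head} moves, so the invariant is kept (note this also shows input $k$ never links a node behind \texttt{head}, as its timestamp is $\ge \mathtt{written}_k.ts \ge \mathtt{head}.ts$). For an advance, let $n'$ be the old head, $n = n'.\mathtt{next}_0$ the new one, $j = n.\mathtt{input}$, $\tau = n.ts$; the advance is taken only when $\mathtt{written}_j \ne n$ (L\ref{lp:tgate_getNextReady}). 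For $k=j$ this forces $n$ to have been inserted by stream $j$'s thread strictly before the current $\mathtt{written}_j$, hence $\mathtt{written}_j.ts \ge n.ts = \tau$. For $k \ne j$ the hypothesis gives $\mathtt{written}_k.ts \ge n'.ts$, and it remains to upgrade this to $\mathtt{written}_k.ts \ge \tau$.

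That upgrade is the crux. Because $n$ is the immediate level-0 successor of $n'$, no node of the list has timestamp strictly between $n'.ts$ and $\tau$; so a $\mathtt{written}_k$ with timestamp $<\tau$ would have to have timestamp exactly $n'.ts$ and lie at or before $n'$. It is not $n'$ itself: if $n'.\mathtt{input}=k$ then \texttt{head} already returned $n'$, which moved $\mathtt{written}_k$ off it forever (\texttt{written} is only ever overwritten by later-inserted nodes); otherwise the inputs differ. So it would lie strictly before $n'$. But in a sequential execution \texttt{head} always advances to its current immediate successor, hence never skips a node that is present as its successor; so this node was not in the list while \texttt{head} passed its position (else \texttt{head} would have returned it, again moving $\mathtt{written}_k$ off it), i.e., it was linked only after \texttt{head} had already moved past. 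That is impossible: $\mathtt{head}.ts$ is nondecreasing and stays $\le n'.ts$ until \texttt{head} reaches $n'$, whereas \texttt{levelinsert} places a new node of timestamp $n'.ts$ after every existing node of that timestamp — in particular after a \texttt{head} resting on such a node — so it lands ahead of \texttt{head}, not behind. This contradiction yields $\mathtt{written}_k.ts \ge \tau$ and closes the induction.

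I expect essentially all of the work to sit in that last paragraph: one must pin down, against the insertions, that the combination of the level-0 list remaining sorted, each input stream being timestamp-ordered, and \texttt{levelinsert} keeping equal-timestamp nodes in FIFO order prevents a reader's \texttt{head} from ever overtaking the largest timestamp any input has contributed — which is precisely the guarantee Definition~\ref{def:ready} demands. The reduction to a sequential execution, the base case handled by the dummy tuples, and the \texttt{insertTuple} and $k=j$ cases are routine by comparison.
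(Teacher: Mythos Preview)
Your proof is correct and considerably more careful than the paper's. The paper dispatches the lemma in three lines: assume $t_i^j.ts > \mergets$, conclude that ``$t_i^j$ would be the tuple with the latest timestamp received by its respective input thread'' (i.e., input $j$), hence the check at L\ref{lp:tgate_getNextReady} would have failed. That middle step is at best elliptical --- the assumption only guarantees some input $k$ (possibly $k\ne j$) whose latest timestamp lies below $t_i^j.ts$, and the paper never explains why the guard on the returned node (which tests input $j$, not input $k$) suffices. Your approach closes precisely this gap: you reduce via Theorem~1 to a sequential trace and maintain the explicit invariant $\mathtt{written}_k.ts \ge \mathtt{head}.ts$ for \emph{every} $k$, which is exactly what is needed to cover the $k\ne j$ case the paper leaves implicit. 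Your crux paragraph (ruling out a $\mathtt{written}_k$ stranded strictly before $n'$) is where the real work sits, and it is sound; the two ingredients you identify --- that \texttt{levelinsert} places equal-timestamp nodes in FIFO order, and that $\mathtt{written}_k$ only ever moves to later-inserted nodes --- are what make it go through. One simplification worth noting: the position-based invariant ``for every $k$, the reader's \texttt{head} lies strictly before the node $\mathtt{written}_k$ in the level-0 list'' is equivalent and makes the advance step a one-liner ($\mathtt{written}_k$ strictly after $n'$ means at or after $n$, and equality with $n$ fails the check at L\ref{lp:tgate_getNextReady}), collapsing your crux paragraph entirely. What the paper's argument buys is brevity; what your route buys is an argument that actually accounts for all inputs.
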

\begin{proof}
 Assume towards a contradiction that $t_i^j > merge_{ts}$.
 Then $t_i^j$ would be the tuple with the latest timestamp received by its respective input thread.
 But then the check in line~\ref{lp:tgate_getNextReady} would have failed and $t_i^j$ would not have been returned.
\end{proof}
\begin{mylemma}\label{lem:whive_ready}
 All tuples contributing to a winset returned by the \texttt{getNextWinSet} satisfy the ready definition (cf. Def.~\ref{def:ready}).
\end{mylemma}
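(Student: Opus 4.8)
The plan is to argue separately for the two aggregate designs that invoke \texttt{getNextWinSet}, \TUMLMC and \WIML, since the method and its \texttt{written}-based guard are identical in both but the origin of the tuples handed to \texttt{updateWindows} differs. For \TUMLMC the statement is an immediate corollary of Lemma~\ref{lem:tgate_ready}: every tuple that ever reaches \texttt{updateWindows} was obtained from \texttt{getNextReadyTuple}, hence was \smergeable{} when returned, and because $\mergets$ is non-decreasing along any execution, a tuple satisfying Definition~\ref{def:ready} keeps satisfying it; therefore every tuple contributing to any \winset --- and in particular to a returned one --- is \smergeable{}, with no appeal to \texttt{getNextWinSet} itself.

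The real content is the \WIML case, where \texttt{updateWindows} is fed directly from the input streams, with no \tuplist acting as a filter. Here I would reduce the claim to the following invariant of the \winlist: whenever \texttt{getNextWinSet} advances \texttt{readhead} onto the node holding the \winset with timestamp $w$, which covers the interval $[w,w+\mathit{size})$, every input stream $k$ has by then delivered a tuple whose timestamp is at least $w+\mathit{size}$. Granting this, let $t$ be any tuple contributing to that \winset; then $t.ts\in[w,w+\mathit{size})$, so for every input the largest timestamp received so far is $\ge w+\mathit{size}>t.ts$, whence $t.ts\le\mergets$ and $t$ is \smergeable{} by Definition~\ref{def:ready}. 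The same bound shows that no tuple arriving later can contribute to the \winset (each subsequent tuple from $k$ has timestamp $\ge w+\mathit{size}$), so the set of tuples contributing to a returned \winset is finite and already realised at the moment of return, and the argument covers all of it.

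To establish the invariant I would induct on the order in which \texttt{getNextWinSet} returns \winsets, with returned timestamps $w_1<w_2<\cdots$; these are strictly increasing because \texttt{readhead} walks the sorted list and the check on L\ref{line:otheraddht} forbids two \winset nodes with equal timestamps. The ingredients are: (i) each cell $\texttt{written}_k$ --- set on L\ref{line:wstarts} to \texttt{windowTSs.first()}, the earliest target-window start $s_{\min}(\tau_k)$ of input $k$'s latest tuple --- is non-decreasing, since $s_{\min}$ is non-decreasing in the timestamp and streams are timestamp-sorted; (ii) a \winset created by \texttt{updateWindows} with timestamp $\ge\texttt{readhead.ts}$ has not been garbage-collected, so it lies in the list and cannot sit strictly between \texttt{readhead} and \texttt{readhead.next}$_0$; (iii) \texttt{readhead} advances onto a timestamp $w$ only when $w\notin\texttt{written}$ (L\ref{line:canforward}). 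At the $(i+1)$-th return, (iii) gives $\texttt{written}_k\ne w_{i+1}$, (ii) rules out $w_i<\texttt{written}_k<w_{i+1}$, and $\texttt{written}_k\le w_i$ is impossible because the \winset at timestamp $\texttt{written}_k$ would then have been returned at some earlier step at which, by the induction hypothesis, $\texttt{written}_k$ already exceeded it --- contradicting (i). Hence $\texttt{written}_k>w_{i+1}$ for every $k$, and since $s_{\min}(\tau_k)>w_{i+1}$ forces $\tau_k\ge w_{i+1}+\mathit{size}$ (otherwise $w_{i+1}$ would exceed $\tau_k-\mathit{size}$ and hence be a target window of that tuple), the invariant holds at step $i+1$.

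The step I expect to be the main obstacle is precisely this inductive case, and within it the exclusion of $\texttt{written}_k<w_{i+1}$: it has to combine the monotonicity of $\texttt{written}_k$, the garbage-collection bound, and the fact that \texttt{updateWindows} never inserts a \winset behind \texttt{readhead} --- the last being itself a consequence of the same \texttt{written}-guard together with \texttt{inserthead} lagging \texttt{readhead}, so one must verify the invariant is genuinely preserved rather than quietly assumed. The boundary arithmetic relating $\texttt{written}_k>w$ to $\tau_k\ge w+\mathit{size}$ must also be carried out exactly at the window edges, and the initial configuration --- inputs that have not yet delivered a tuple, analogous to the dummy tuples installed by \texttt{initializeTGate} in the \tuplist --- should be pinned down so that every \texttt{written} cell is well defined throughout.
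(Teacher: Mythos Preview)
Your proposal is correct but takes a substantially different route from the paper. The paper gives a three-sentence contradiction argument parallel to its proof of Lemma~\ref{lem:tgate_ready}: assume some tuple in the returned \winset is not \smergeable, assert it is the latest from its input thread, and conclude that the check on L\ref{line:canforward} would have returned \texttt{null} because the \winset's timestamp would lie among the windows recorded in \texttt{written}. You instead split by aggregate design, dispatching \TUMLMC directly via Lemma~\ref{lem:tgate_ready} and the monotonicity of $\mergets$, and for \WIML you set up an inductive invariant on the sequence $w_1<w_2<\cdots$ of returned \winset timestamps, showing that at each return every input has already delivered a tuple with timestamp at least $w_i+\mathit{size}$. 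Your decomposition buys real rigor: it handles the case where the offending input is not the one that produced the non-\smergeable{} tuple, and it makes explicit why comparing only against the \emph{first} target-window timestamp stored in $\texttt{written}_k$ (rather than all windows of the latest tuple) is enough---a point the paper's argument elides. The paper's approach buys brevity and a direct pointer to the code line responsible, but your version is the one that actually closes the gaps you yourself flag (insertions behind \texttt{readhead}, initialization of \texttt{written}, the boundary arithmetic at window edges).
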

\begin{proof}
 Assume there is a tuple that does not.
 As above, the tuple would be the one with the latest timestamp received by its respective input thread.
 In that case the check in line~\ref{line:canforward} would have caused \texttt{null} to be returned, as the winset's timestamp would belong to the windows in the \texttt{written} array.
\end{proof}
 
\begin{customthm}{3}
 The \TUMLSC, \TUMLMC and \WIML aggregate implementations (Alg.\ref{lst:lfcore},\ref{lst:combined}) are lock-free and provide deterministic processing of tuples.
\end{customthm}
\begin{proof}
 All method calls are bounded by a constant and include either local computations for each thread, or calls to data structure implementations that are lock-free (\tuplist,\winlist).
 Thus, the implementations are lock-free.
 
 Lemmas~\ref{lem:tgate_ready} and~\ref{lem:whive_ready} show that the output tuples of the aggregate implementations always consist of ready tuples.
 Therefore, the aggregate implementations are deterministic (cf. Def.~\ref{def:determinism}).
\end{proof}

\section{Evaluation}\label{sec:evaluation}
In this section, we study the performance of the different aggregate operators presented in Section~\ref{sec:newabstractdatatypesandaggregatedesigns} in terms of throughput and latency.
We also include the evaluation of a lock-free version of the multi-queue implementation, referred to as \MQLF{} and relying on the lock-free queue by Michael and Scott~\cite{Michael:1996}, with the sole purpose of showing that \MQ{}'s poor performance does not depend only on the use of locks.
First, we provide evidence of the superiority of the \tuplist{} with respect to a lock-free skip list by measuring their maximum throughput and latency.
Subsequently, we discuss the improvement enabled by \TUMLSC{} and \WIML{}, compared to \MQ{} and \MQLF{}, for different queries and varying number of inputs. Our goal is to show how the \TUMLSC{} and \WIML{} implementations can achieve higher performance than \MQ{} and \MQLF{} ones by increasing the computing time over the synchronization time of their underlying threads. At the same time, we also show the scalability of the \WIML{} implementation for an increasing number of threads running the \updateX{} stage (assigned to $I_t$ threads in this case, as discussed in Section~\ref{sec:WiML}).
In the last part of the section, we evaluate \TUMLMC{}'s scalability for increasing number of threads running the \updateX{} stage, complementing the scalability evaluation of the \WIML{} implementation.
Our experiments take into account the aggregate's features that affect throughput and latency: the overall number of keys, the number of windows to which each tuple contributes and the cost of the aggregate function.
For each feature, we consider 2 stretching points in order to show how traversing its spectrum (e.g., increasing the overall number of keys) affects the overall performance.
All the experiments represent queries that can be found in real-world applications.
We take into account aggregate functions that are commonly used and also evaluate highly costly variants when studying how their cost affects the aggregate performance. Our data sets have been collected from real-world applications.

\subsection{Evaluation setup}

\begin{table*}[!ht]
\scriptsize
\caption{Parameters for queries used in the evaluation. Identifiers \emph{ID} are composed by 3 letters. The first letter (small or capital, representing smaller or larger values/computation-demands) represents the aggregate parameter being studied (k - overall number of keys, w - window size, f - applied function). The last two letters specify whether \emph{F} is order-sensitive (OS) or order-insensitive (OI).}
\label{tab:experiments}
\centering
\begin{tabular}{|c|c|c|c|c|c|p{4cm}|}

  \hline
	ID & \emph{DS} & \emph{WS} & \emph{WA} & \emph{K} & \emph{F} & \emph{Description}\\
  \hline
	\multicolumn{7}{|l|}{Order-sensitive (OS) functions (\MQ{}, \MQLF{}, \TUMLSC{})} \\
  \hline
	k-OS & $EC$ & 30 & 3 & $meter$ & \texttt{first()} & Forward the first consumption reading, group by meter (243 distinct keys) \\
	K-OS & $SC$ & 30 & 3 & $song$ & \texttt{first()} & Forward the first comment, group by song (40,000 distinct keys) \\
	w-OS & $EC$ & 20 & 2 & $meter$ & \texttt{first()} & Forward the first consumption reading, group by meter (each tuple contributes to 10 windows) \\
	W-OS & $EC$ & 40 & 2 & $meter$ & \texttt{first()} & Forward the first consumption reading, group by meter (each tuple contributes to 20 windows) \\
	f-OS & $SC$ & 20 & 2 & $song$ & \texttt{first-mail($cmt$)} & Forward the first comment containing a mail address, group by song \\
	F-OS & $SC$ & 20 & 2 & $song$ & \texttt{first-mail/IP($cmt$)} & Forward the first comment containing a mail or an IP address, group by song \\
  \hline
	\hline
	\multicolumn{7}{|l|}{Order-insensitive (OI) functions (\MQ{}, \MQLF{}, \WIML{})} \\
  \hline
	k-OI & $EC$ & 30 & 3 & $meter$ & \texttt{count()} & Count the number of consumption readings, group by meter (243 distinct keys) \\
	K-OI & $SC$ & 30 & 3 & $song$ & \texttt{count()} & Count the number of comments, group by song (40,000 distinct keys) \\
	w-OI & $EC$ & 20 & 2 & $meter$ & \texttt{avg($cons$)} & Compute the average consumption, group by meter (each tuple contributes to 10 windows) \\
	W-OI & $EC$ & 40 & 2 & $meter$ & \texttt{avg($cons$)} & Compute the average consumption, group by meter (each tuple contributes to 20 windows) \\
	f-OI & $SC$ & 20 & 2 & $song$ & \texttt{count-mail($cmt$)} & Count the number of comments containing a mail address, group by song \\
	F-OI & $SC$ & 20 & 2 & $song$ & \texttt{count-mail/IP($cmt$)} & Count the number of comments containing a mail or an IP address, group by song \\
  \hline
	
\end{tabular}
\end{table*}

The evaluation has been conducted with an Intel-based workstation with two sockets of 6-core Xeon E5645 (Nehalem) processors with Hyper Threading (24 logical cores in total) and 48\,GB DDR3 memory at 1366\,MHz.
The prototype has been implemented in Java and experiments have been run using the OpenJDK Runtime Environment (IcedTea 2.3.9) with the default garbage collection settings.

We use two datasets that we refer to as \textit{SoundCloud} ($SC$) and \textit{Energy Consumption} ($EC$).
$SC$ has been collected from the online audio distribution platform SoundCloud from a subset of approximately $40,000$ users exchanging comments about $250,000$ songs between $2007$ and $2013$.
Tuples contain comments sent by users in relation to songs and are composed by the attributes $\langle ts, user, song, cmt \rangle$.
$EC$ contains energy consumption readings collected from a set of 243 smart meters between May 2012 and June 2013.
Tuples' schema is composed by attributes $\langle ts, meter, cons \rangle$.

Each experiment starts with a warm-up phase and ends with a cool-down phase. During the measuring phase (lasting five minutes) tuples are injected at a constant rate; throughput is measured as the average number of tuples/second (t/s) processed by the aggregate operator during the measuring phase while latency is measured as the average timestamp difference between each output tuple and the latest input tuple that produced it during the measuring phase. Finally, presented results are averaged over $10$ runs.
In each experiment, we deploy one instance of the aggregate, together with injectors, running at dedicated threads and maintaining per-second throughput statistics, and a sink running at a dedicated thread, collecting output tuples and maintaining per-second latency statistics.
When running experiments with different input rates, we process data from the $EC$ and $SC$ datasets, modifying only the rate at which tuples are injected. In order to find the maximum throughput and the corresponding latency of a given setup, several experiments (for increasing input rates) are run, as long as results do not indicate the setup cannot sustain the injected rate.
Table \ref{tab:experiments} presents the parameters of the queries used in the evaluation: identifier \emph{ID}, dataset \emph{DS}, window size \emph{WS}, window advance \emph{WA}, group-by parameter \emph{K} and aggregate function \emph{F}.
For the \textit{\tuplist} and \textit{\winlist}, the maximum possible height of a node ($maxlevels$) is set to $3$ in all experiments.

\subsection{Skip list and \tuplist{} comparison}

In this experiment, we evaluate the performance of a lock-free skip list and the \tuplist{}, measuring the maximum throughput with which $EC$ tuples coming from multiple input streams can be sorted.
For this comparison we used the \\ \texttt{ConcurrentSkipListMap} from Java's \texttt{java.util.concurrent} package, an implementation based also on~\cite{sundell_fast_2005}.

Results for $5,\ 10,\ 15$ and $20$ input streams are presented in Fig.~\ref{fig:comp}.
It can be noted that, when using a skip list, checking for \smergeable{} tuples (done explicitly since the skip list does not differentiates between tuples that are \smergeable{} or not) results in a cost linear to the number of input streams.
Thus, the skip list's throughput degrades while \tuplist{} throughput grows.
For 20 input streams, the skip list is able to sort approximately 1.5 million t/s while the \tuplist reaches approximately 2.2 million t/s (1.5 times better).
The \tuplist also achieves a lower sorting latency, approximately 1 ms for 20 input streams against the 1.6 ms latency of the skip list.

\begin{figure}[h!]
  \centering
  \includegraphics[width=0.24\columnwidth]{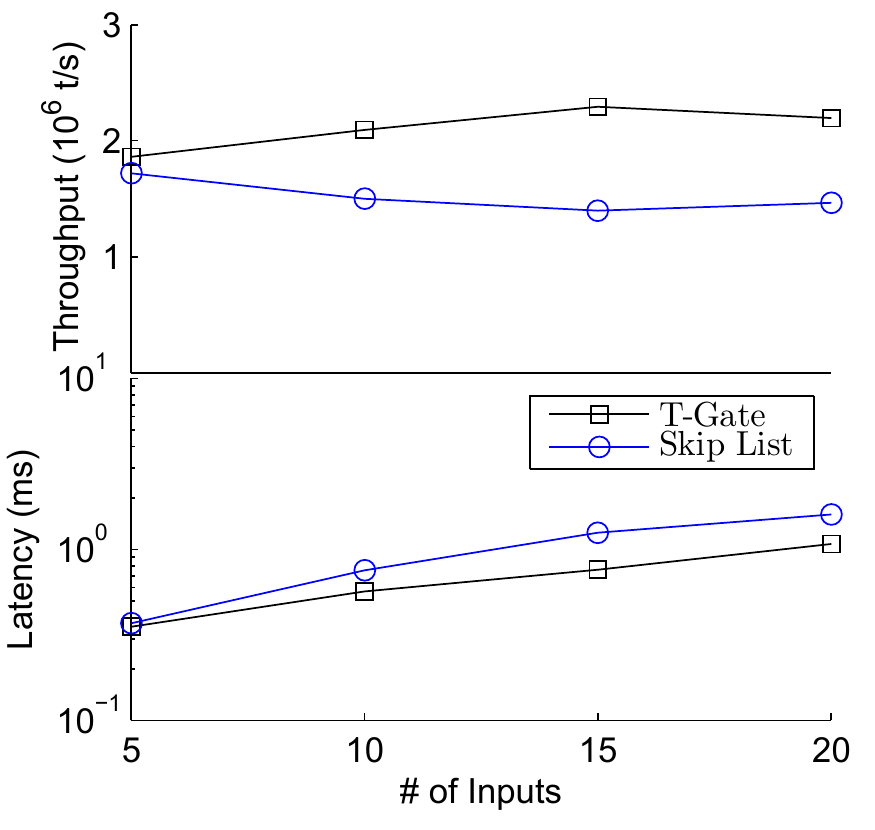}
  \caption{\tuplist{} and Skip List comparison.}
  \label{fig:comp}
\end{figure}

\subsection{Baseline and new designs comparison}

In this set of experiments, we measure the maximum throughput and the latency of the \MQ{}, \MQLF{}, \TUMLSC{} and \WIML{} implementations,
quantifying the improvement enabled by the use of concurrent data structures while using the same number of input and output threads.

\begin{figure}[h!]
\begin{minipage}[b]{0.24\columnwidth}
  \centering
  \includegraphics[width=\columnwidth]{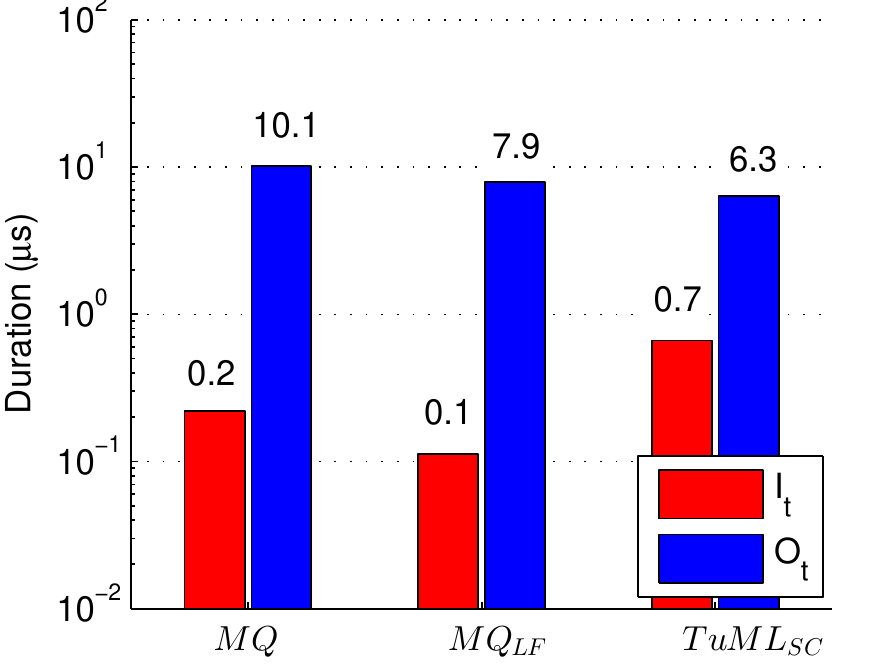}
  \caption{K-OS $I_t$'s and $O_t$'s durations.}
  \label{fig:it_ot_ordersensitive}
\end{minipage}\hfill
\begin{minipage}[b]{0.24\columnwidth}
  \centering
  \includegraphics[width=\columnwidth]{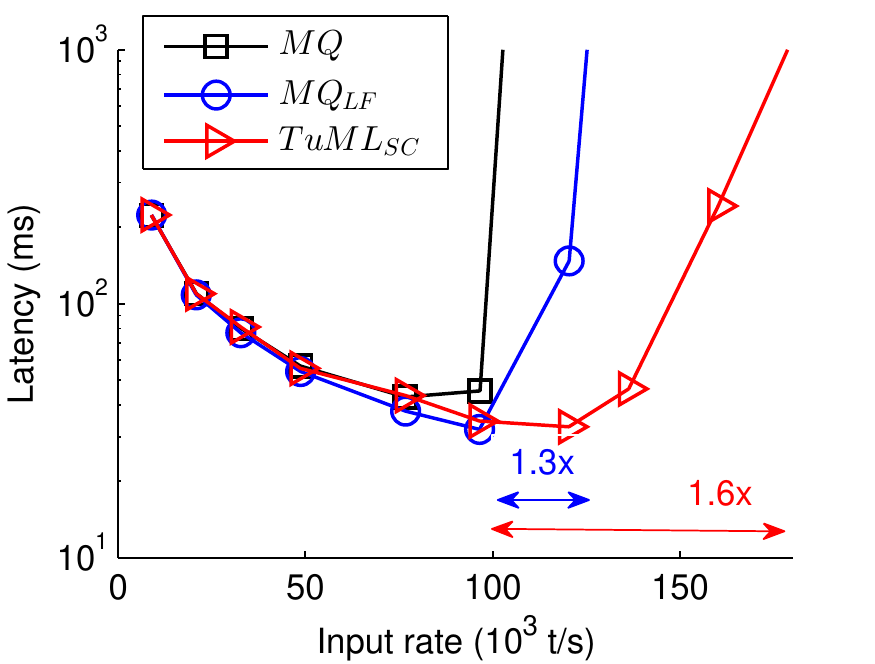}
  \caption{K-OS latency evolution.}
  \label{fig:os_evolution}
\end{minipage}\hfill
\begin{minipage}[b]{0.24\columnwidth}
  \centering
  \includegraphics[width=\columnwidth]{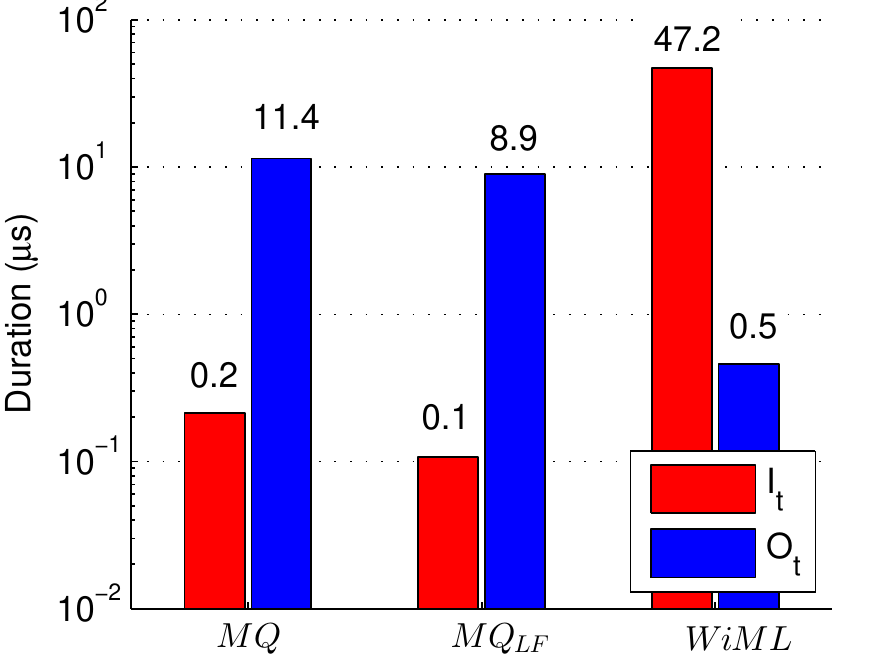}
  \caption{K-OI $I_t$'s and $O_t$'s durations.}
  \label{fig:it_ot_orderinsensitive}
\end{minipage}\hfill
\begin{minipage}[b]{0.24\columnwidth}
  \centering
  \includegraphics[width=\columnwidth]{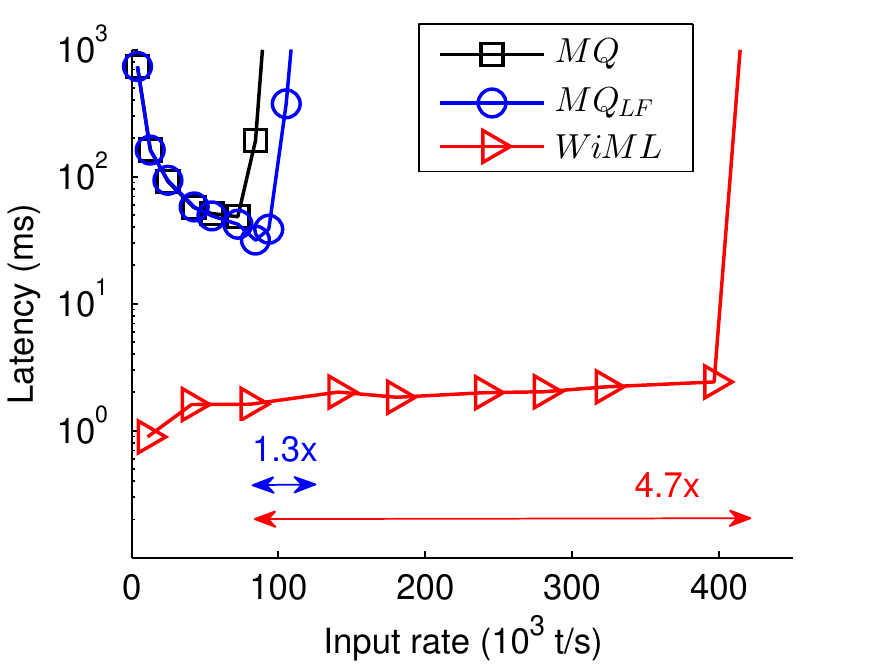}
   \caption{K-OI latency evolution.}
  \label{fig:oi_evolution}
\end{minipage}
  \caption{Throughput and latency improvements enabled by \tuplist{} and \winlist{} in \TUMLSC{} and \WIML{} implementations.}
  \label{fig:t_l_evolution}
\end{figure}

\paragraph*{Parallelization benefit}
We first focus on the average duration of the main operations performed by $I_t$ and $O_t$ threads for queries K-OS and K-OI and 20 input streams.
Results for the query K-OS are presented in Fig.~\ref{fig:it_ot_ordersensitive} (we use logarithmic scale to better appreciate the different orders of magnitude).
Both $I_t$'s and $O_t$'s operations are faster for \MQLF{} compared to \MQ{} since the former relies on lock-free queues.
$I_t$'s duration increases while $O_t$'s decreases for \TUMLSC{} since the \smergeX{} operation is performed by $I_t$.
$O_t$ threads, which constitute the bottleneck, will reach $100,000$, $130,000$ and $160,000$ t/s for \MQ{}, \MQLF{} and \TUMLSC{}, respectively.
Figure~\ref{fig:os_evolution} compares the latency evolution of the different implementations for an increasing input rate.
In all experiments, the latency initially decreases with the increasing rate (lower inter-arrival times at the inputs result in shorter queuing times for input tuples) while it explodes upon saturation of the operator (that is, when the injected load exceeds its maximum throughput).
The throughput achieved by each implementation is close to the expected one (from Fig.~\ref{fig:it_ot_ordersensitive}). \TUMLSC{} achieves a throughput $1.6$ times higher than \MQ{}.
Average durations for query K-OI are shown in Fig.~\ref{fig:it_ot_orderinsensitive}.
It can be noticed that $I_t$'s duration increase is greater for \WIML than \TUMLSC since both \smergeX{} and \updateX{} operations are performed by the $I_t$ thread.
With \WIML{}, each $I_t$ thread is able to process $22,000$ t/s in parallel.
The highest throughput that can be achieved by \WIML{} is in this case given by $O_t$, since the latter can only produce up to $2$ million t/s.
As shown in Fig.~\ref{fig:oi_evolution}, \WIML{} achieves a higher throughput and a lower latency compared to \MQLF{} and \MQ{} (approximately $400,000$ t/s),
independently of the input rate (tuples are processed independently by each $I_t$ thread).

\medskip

In the remaining of this section, we study the performance of \TUMLSC{} and \WIML{} with respect to \MQ{} and \MQLF{} for the different queries presented in Table~\ref{tab:experiments}.
In all experiments, we experience the same performance behavior when comparing \MQ{}, \MQLF{}, \TUMLSC{} and \WIML{} implementations, as explained in the following.
As the number of inputs increases, the latency of multi-queue implementations (\MQ{} and \MQLF{}) increases while their throughput decreases linearly.
This pattern is broken by the \TUMLSC{}, whose throughput is rather stable as the number of inputs increases.
The pattern is even reversed by the \WIML{}, whose throughput actually increases as the number of inputs increases (since stage \updateX{} is run in parallel by each $I_t$ thread) while achieving the lowest and almost constant latency.

\paragraph*{Varying number of keys}
The overall number of keys in the data affects both the operator's throughput and latency since the higher the number of keys, the higher the number of tuples produced for all windows starting at the same timestamp.
Results highlight that \TUMLSC{} and \WIML{} perform better than both \MQ{} and \MQLF{}, whose throughput decreases linearly with the increasing number of inputs.

With respect to order-sensitive functions, we compare queries k-OS (Fig.~\ref{fig:keys_os_ec}) and K-OS (Fig.~\ref{fig:keys_os_sc}).
The query k-OS uses the $EC$ dataset ($243$ distinct keys) while the query K-OS uses the $SC$ dataset ($40,000$ distinct keys).
The upper part of each figure presents the throughput while the bottom part presents the latency (in logarithmic scale).
For 20 inputs, \TUMLSC{} provides the highest throughput ($2.9$ times better than \MQ{}'s for query k-OS) and the lowest latency.
\begin{figure}[h!]
\begin{minipage}[b]{0.24\columnwidth}
  \centering
  \includegraphics[width=\columnwidth]{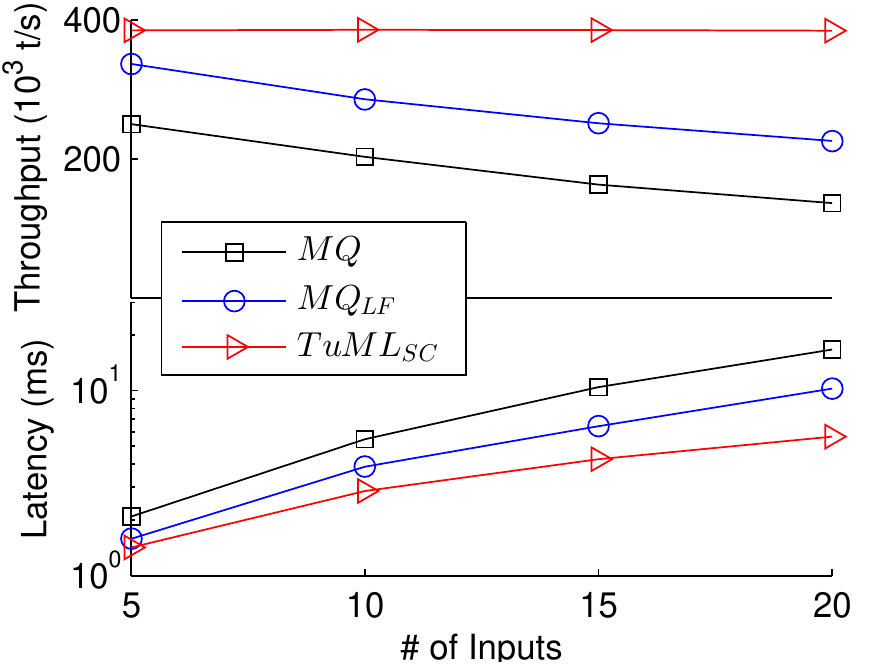}
  \caption{k-OS}
  \label{fig:keys_os_ec}
\end{minipage}\hfill
\begin{minipage}[b]{0.24\columnwidth}
  \centering
  \includegraphics[width=\columnwidth]{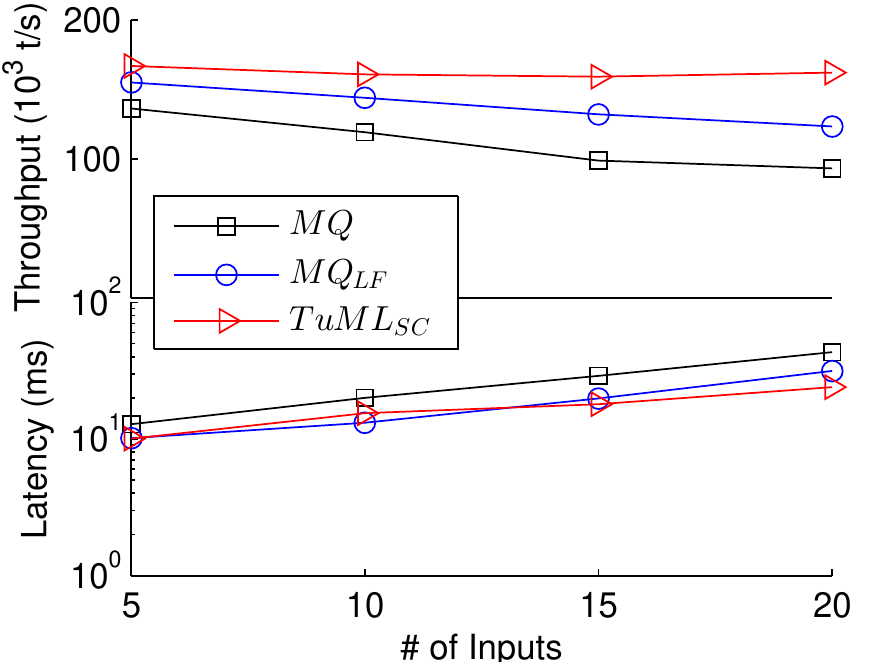}
  \caption{K-OS}
  \label{fig:keys_os_sc}
\end{minipage} \hfill
\begin{minipage}[b]{0.24\columnwidth}
  \centering
  \includegraphics[width=\columnwidth]{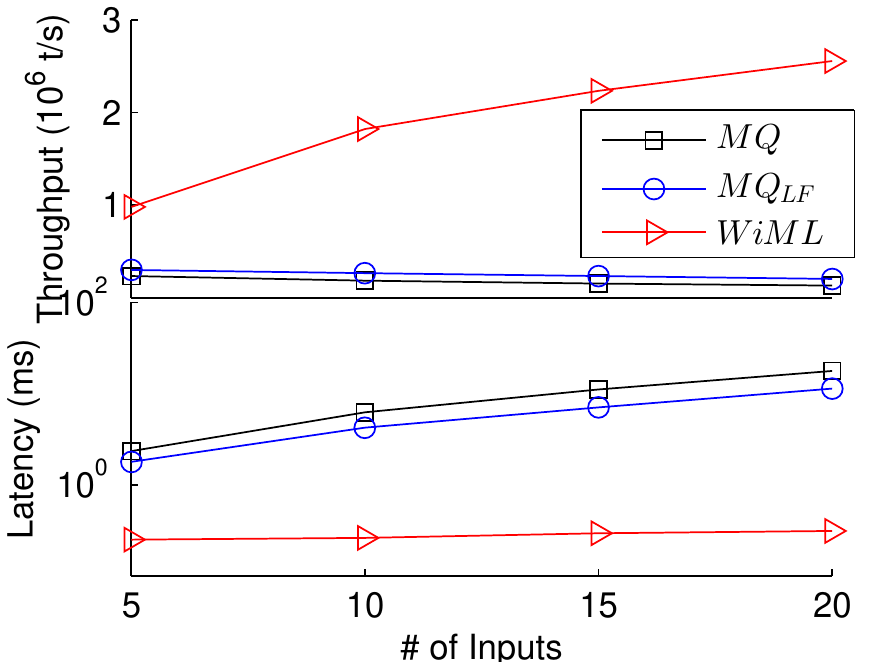}
  \caption{k-OI}
  \label{fig:keys_oi_ec}
\end{minipage}\hfill
\begin{minipage}[b]{0.24\columnwidth}
  \centering
  \includegraphics[width=\columnwidth]{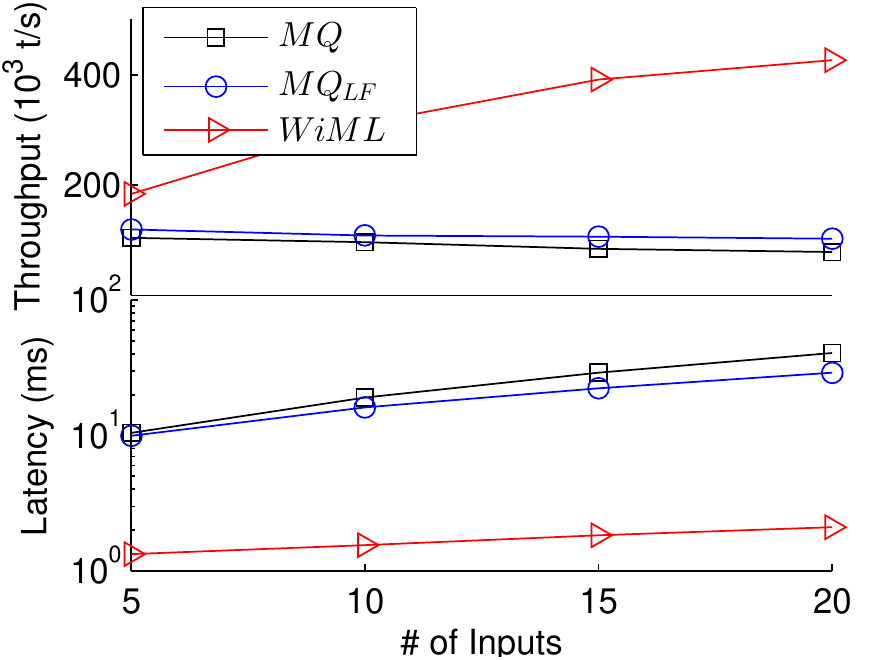}
   \caption{K-OI}
  \label{fig:keys_oi_sc}
\end{minipage}
   \caption{Varying number of keys - evaluation.}
\end{figure}

With respect to order-insensitive functions, we compare queries k-OI (Fig.~\ref{fig:keys_oi_ec}) and K-OI (Fig.~\ref{fig:keys_oi_sc}).
As for order-sensitive functions, both \MQ{}'s and \MQLF{}'s throughput decreases for increasing number of inputs.
On the other hand, \WIML{} throughput increases accordingly to the number of inputs, achieving a maximum throughput of $2.6$ million t/s and $430,000$ t/s, respectively.
\MQ{}'s and \MQLF{}'s latencies increase with the number of inputs while \WIML{}'s one remains approximately constant.

\paragraph*{Varying number of windows to which tuples contribute}
The rationale for this experiment is that the higher the number of windows to which each input tuple contributes, the higher the duration of the \updateX{} operation.
Also in this case, our enhanced implementations outperform both \MQ{} and \MQLF{}.
An increasing number of windows to which tuples contribute results in an overall throughput breakdown and latency increase.

We first focus on order-sensitive functions with queries w-OS (Fig.~\ref{fig:wins_OS_20}) and W-OS (Figure~\ref{fig:wins_OS_40}).
For query w-OS, each tuple contributes to 10 windows.
For query W-OS, each tuple contributes to 20 windows.

\begin{figure}[h!]
\begin{minipage}[b]{0.24\columnwidth}
  \centering
  \includegraphics[width=\columnwidth]{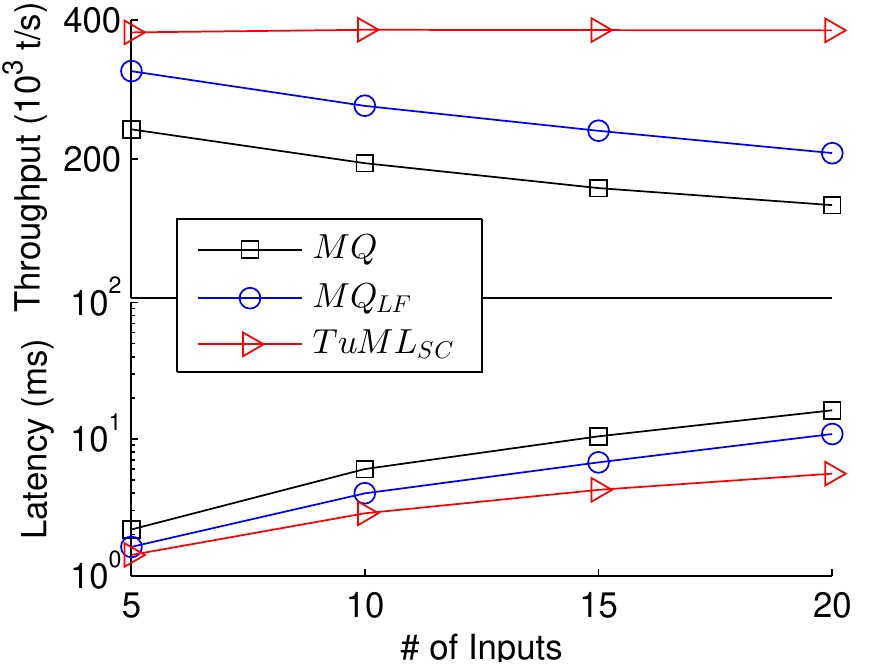}
  \caption{w-OS}
  \label{fig:wins_OS_20}
\end{minipage}\hfill
\begin{minipage}[b]{0.24\columnwidth}
  \centering
  \includegraphics[width=\columnwidth]{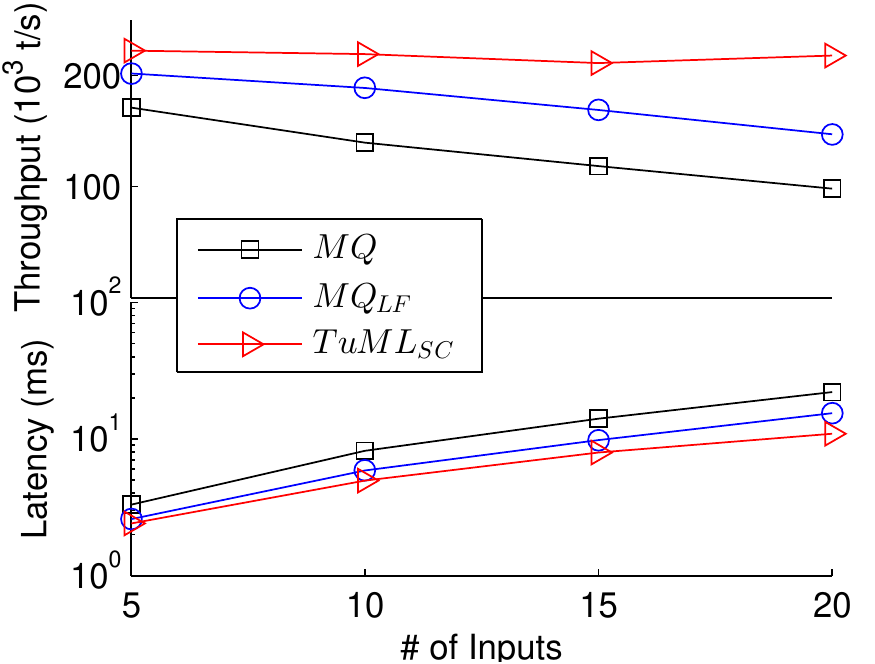}
  \caption{W-OS}
  \label{fig:wins_OS_40}
\end{minipage} \hfill
\begin{minipage}[b]{0.24\columnwidth}
  \centering
  \includegraphics[width=\columnwidth]{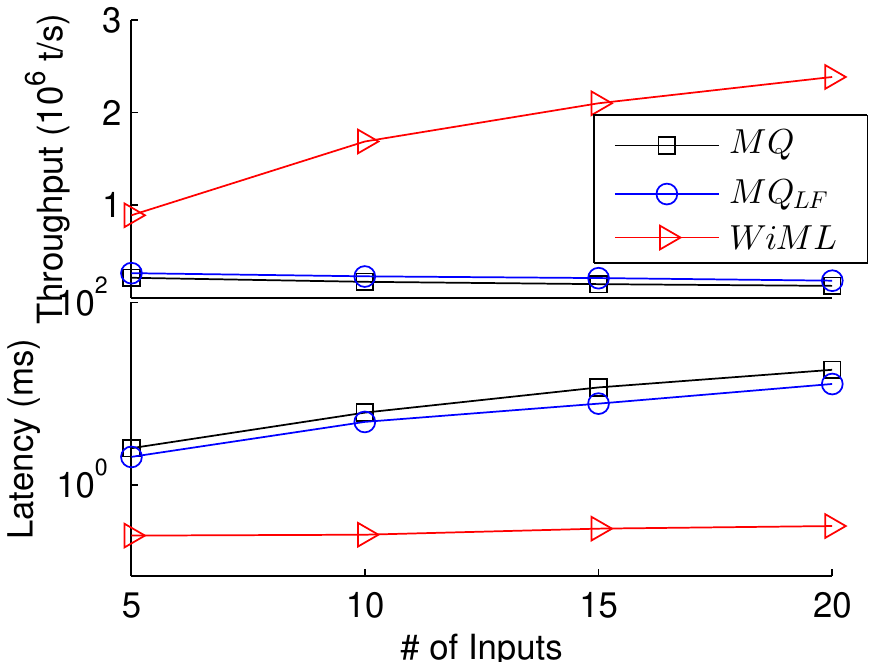}
  \caption{w-OI}
  \label{fig:wins_OI_20}
\end{minipage}\hfill
\begin{minipage}[b]{0.24\columnwidth}
  \centering
  \includegraphics[width=\columnwidth]{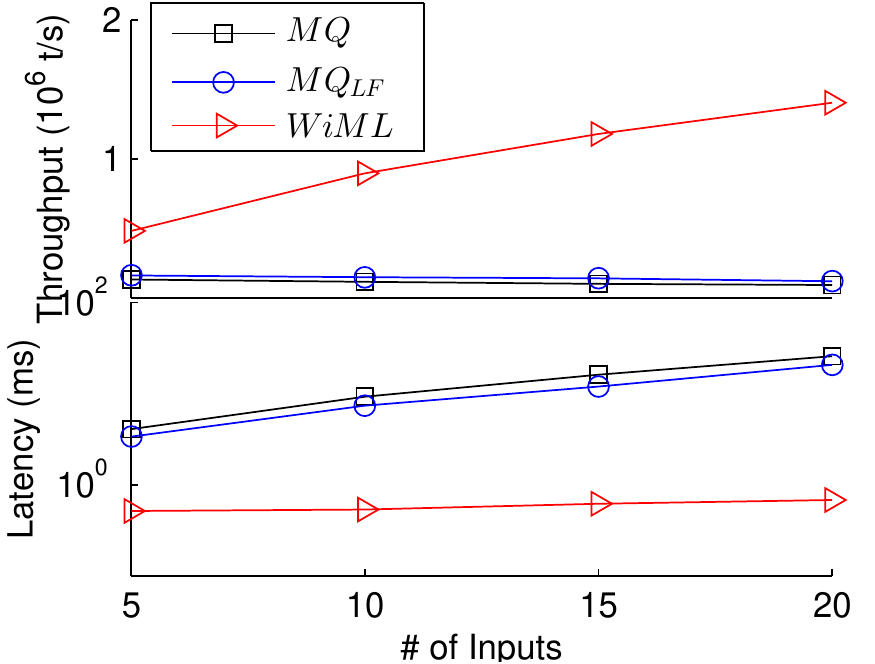}
   \caption{W-OI}
  \label{fig:wins_OI_40}
\end{minipage}
   \caption{Varying number of windows to which tuples contribute - evaluation.}
\end{figure}

With respect to order-insensitive functions, we compare queries w-OI (Fig.~\ref{fig:wins_OI_20}) and W-OI (Fig.~\ref{fig:wins_OI_40}).
Also in this case, \WIML{} outperforms \MQ{} and \MQLF{}.
For queries w-OI and W-OI, \WIML{}'s maximum throughput is of approximately $2.4$ and $1.4$ million t/s, $19$ and $16$ times better than \MQ{}, respectively.

\paragraph*{Varying function cost}
In this set of experiments, we study how throughput and latency evolve with respect to different function costs. We expect the throughput to decrease and the latency to increase accordingly to the increasing cost of the aggregation function.
As observed before, \TUMLSC{} performs better than multi-queue implementations, although this improvement becomes smaller when running very expensive aggregate functions.
The reason for this smaller improvement is due to the dominance of the heavy aggregate function computations over other computations performed by the operator (e.g., the sorting ones) given that both  \TUMLSC{} and multi-queue implementations define a single thread dedicated to the former.
\WIML{} outperforms multi-queue implementations both in terms of throughput and latency independently of the aggregate function cost (in this case, despite relying on the same overall number of threads, the expensive aggregate function is run in parallel by each $I_t$ thread).

With respect to order-sensitive functions, we compare queries f-OS (Fig.~\ref{fig:re1_os}) and F-OS (Fig.~\ref{fig:re2_os}).
When increasing the function cost (query F-OS), \TUMLSC{}'s throughput and latency become really close to \MQ{}'s and \MQLF{}'s ones.
\begin{figure}[h!]
\begin{minipage}[b]{0.24\columnwidth}
  \centering
  \includegraphics[width=\columnwidth]{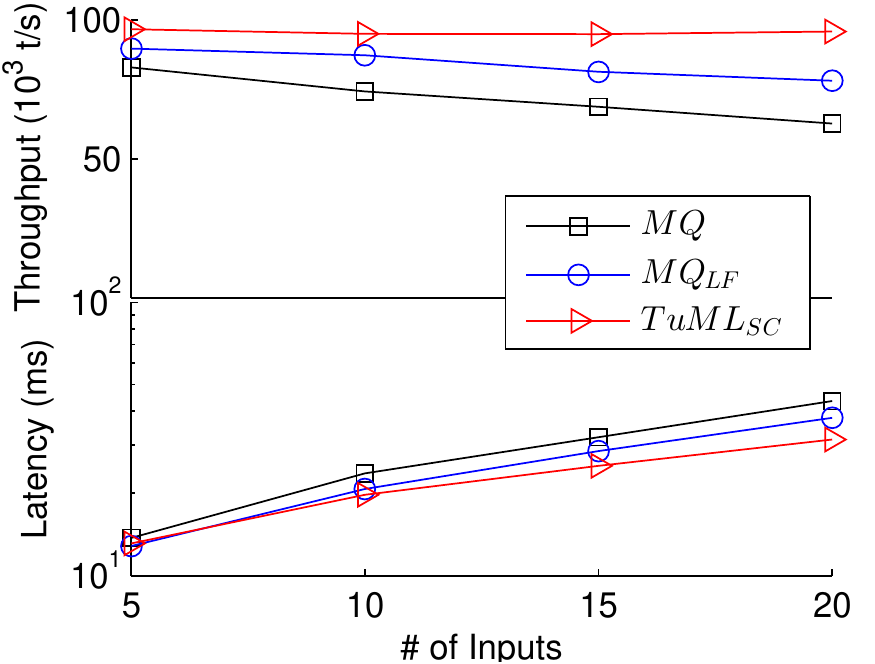}
  \caption{f-OS}
  \label{fig:re1_os}
\end{minipage}\hfill
\begin{minipage}[b]{0.24\columnwidth}
  \centering
  \includegraphics[width=\columnwidth]{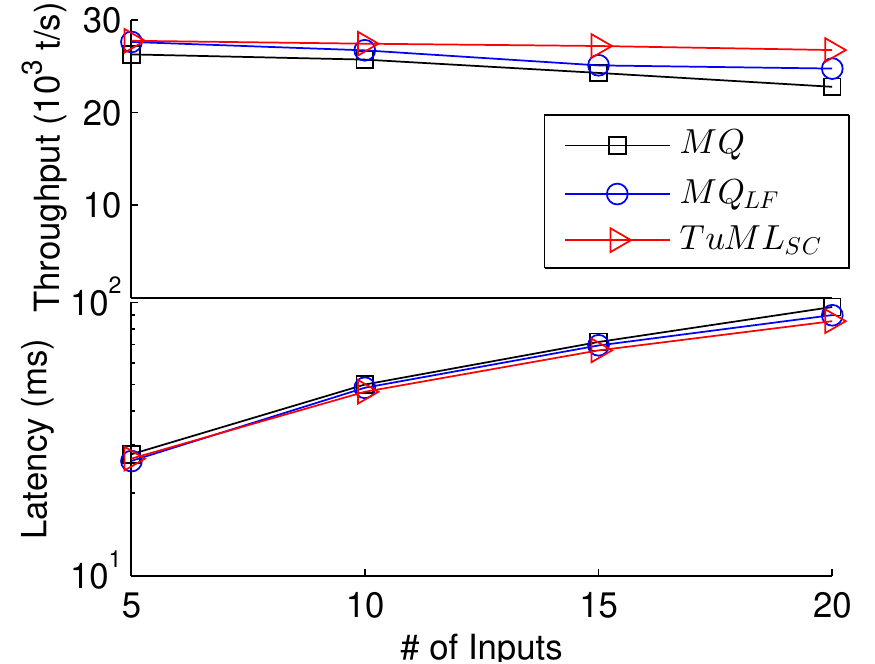}
  \caption{F-OS}
  \label{fig:re2_os}
\end{minipage} \hfill
\begin{minipage}[b]{0.24\columnwidth}
  \centering
  \includegraphics[width=\columnwidth]{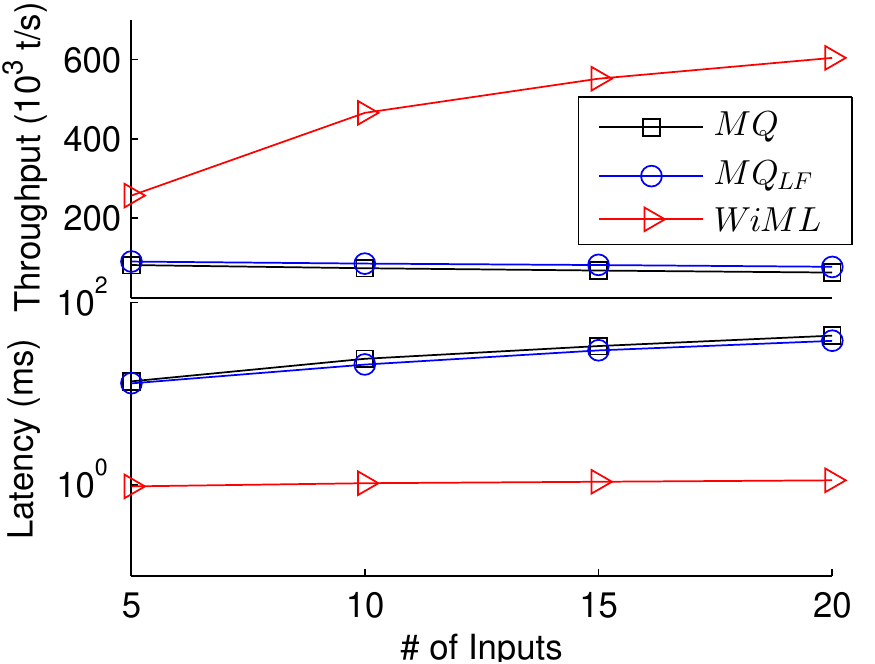}
  \caption{f-OI}
  \label{fig:re1_oi}
\end{minipage}\hfill
\begin{minipage}[b]{0.24\columnwidth}
  \centering
  \includegraphics[width=\columnwidth]{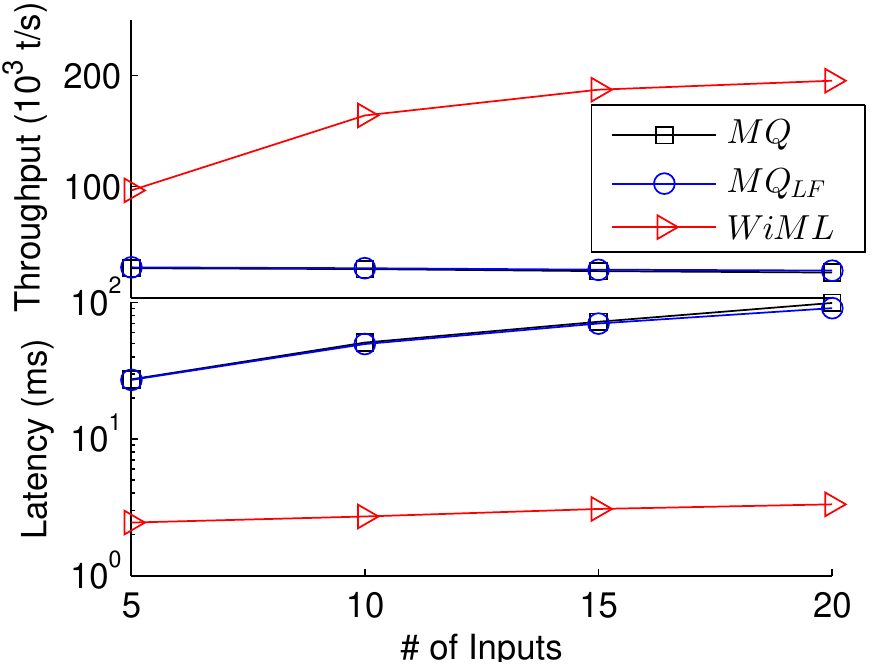}
   \caption{F-OI}
  \label{fig:re2_oi}
\end{minipage}
   \caption{Varying function cost - evaluation.}
\end{figure}
Throughput and latency evolution for order-insensitive functions are evaluated for queries f-OI (\ref{fig:re1_oi}) and F-OI (\ref{fig:re2_oi}).
For both experiments, \WIML{} achieves a throughput of approximately $615,000$ t/s, $9$ times better than \MQ{}.

\subsection{\TUMLMC{} scalability evaluation}\label{sec:sl}

In this section, we focus on the scalability of the \TUMLMC{} implementation.
We execute all the previous queries for order-sensitive functions using the \TUMLMC{} implementation for an increasing number of threads (up to $12$, the physical number of cores of the machine used in the evaluation).
For each query, we present how the throughput and the latency evolve when considering $5$ and $20$ inputs streams.

\begin{figure}[h!]
\begin{minipage}[b]{0.24\columnwidth}
  \centering
  \includegraphics[width=\columnwidth]{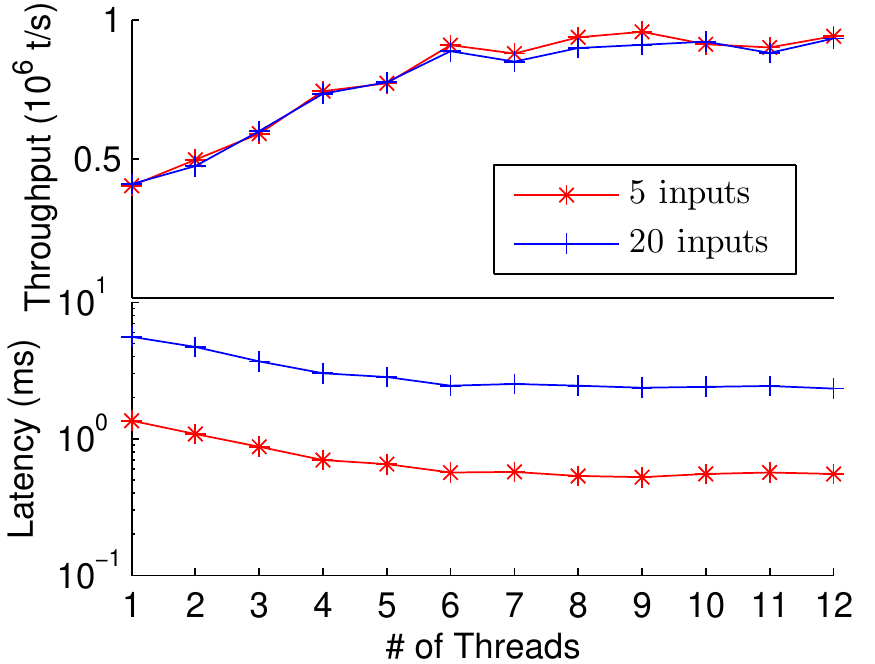}
  \caption{k-OS}
  \label{fig:k1_sl}
\end{minipage}\hfill
\begin{minipage}[b]{0.24\columnwidth}
  \centering
  \includegraphics[width=\columnwidth]{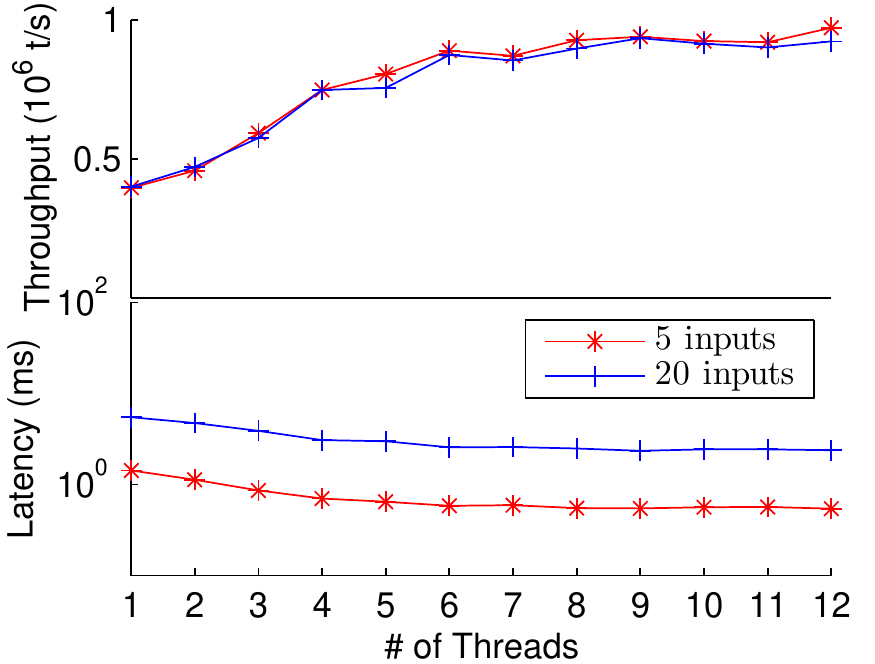}
  \caption{w-OS}
  \label{fig:w1_sl}
\end{minipage}\hfill
 \begin{minipage}[b]{0.24\columnwidth}
  \centering
  \includegraphics[width=\columnwidth]{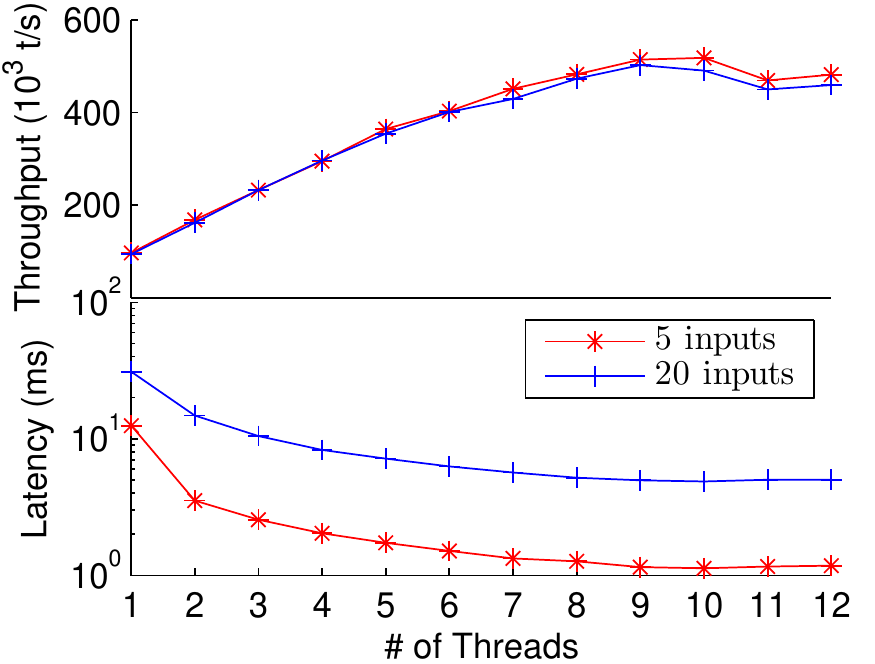}
  \caption{f-OS}
  \label{fig:f1_sl}
\end{minipage}\\
\begin{minipage}[b]{0.24\columnwidth}
  \centering
  \includegraphics[width=\columnwidth]{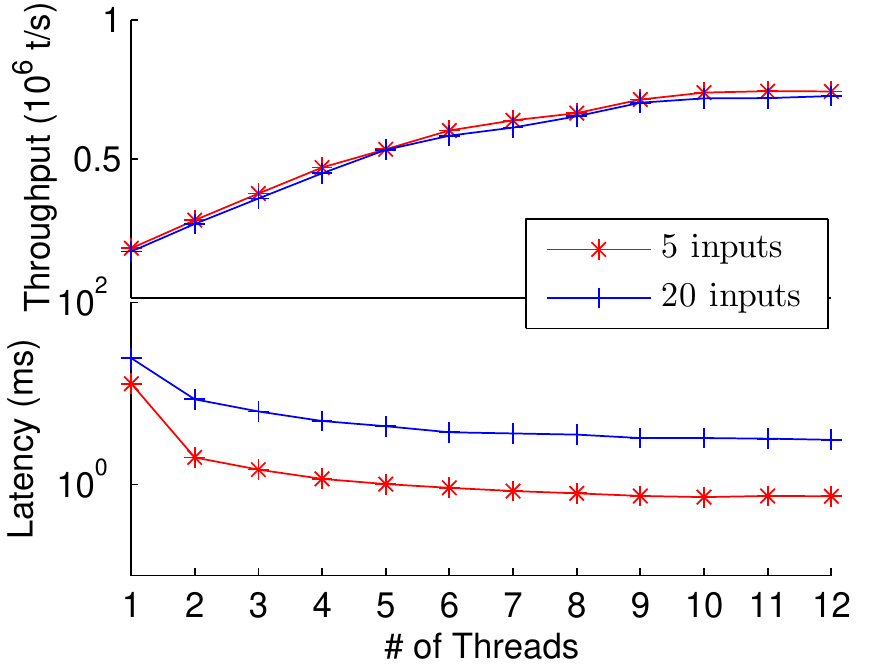}
  \caption{K-OS }
  \label{fig:k2_sl}
\end{minipage} \hfill
\begin{minipage}[b]{0.24\columnwidth}
  \centering
  \includegraphics[width=\columnwidth]{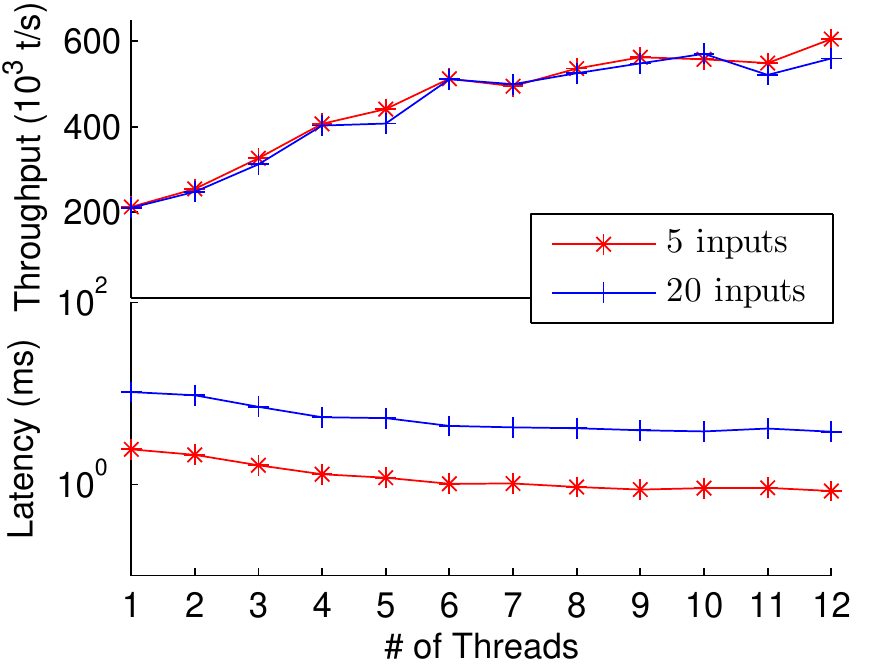}
  \caption{W-OS}
  \label{fig:w2_sl}
\end{minipage}\hfill
\begin{minipage}[b]{0.24\columnwidth}
  \centering
  \includegraphics[width=\columnwidth]{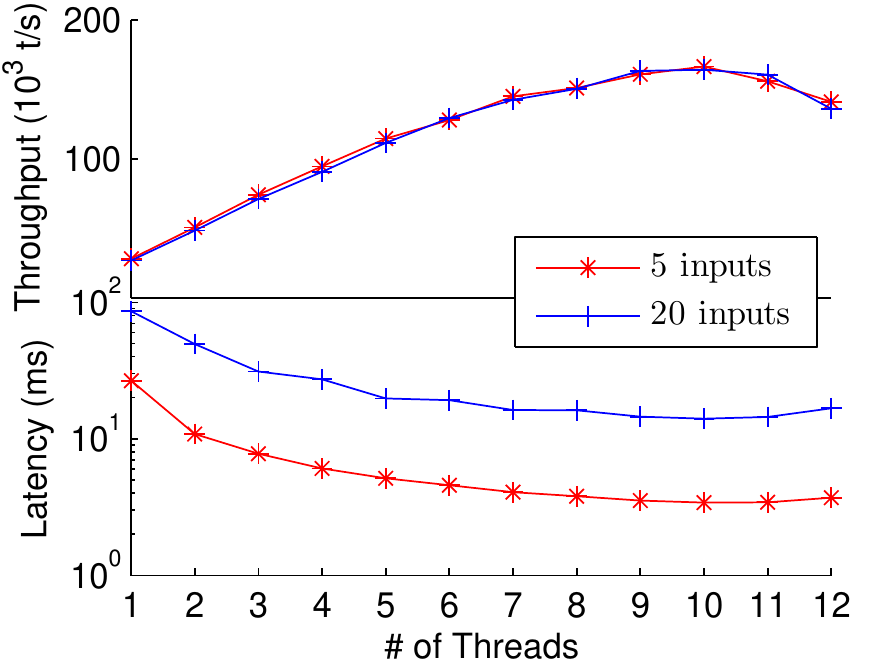}
  \caption{F-OS}
  \label{fig:f2_sl}
\end{minipage}
   \caption{\TUMLMC{} scaling for increasing number of threads.}
   \label{fig:sl}
\end{figure}
Figures~\ref{fig:k1_sl} and \ref{fig:k2_sl} present the throughput and latency evolution for queries k-OS and K-OS.
It can be noticed that K-OS scales better than k-OS for an increasing number of threads due to the increase of $O_t$ operations' duration caused by the higher number of keys.
Figures~\ref{fig:w1_sl} and \ref{fig:w2_sl} present the throughput and latency evolution for queries w-OS and W-OS.
In this experiment, throughput and latency behave similarly despite the increased duration of $O_t$ operations. This is because the increased $O_t$ operations' duration is actually caused by an higher number of windows updated by each tuple, resulting in an higher contention in the underlying \winlist.
Finally, Figures~\ref{fig:f1_sl} and \ref{fig:f2_sl} present the throughput and latency evolution for queries f-OS and F-OS.

\subsection{Summary of results}
Comparing the implementations that rely on one $I_t$ thread per input and a single output thread $O_t$, both \TUMLSC{} and \WIML{} perform better than \MQ{} and \MQLF{}, enabling coping with streams of higher speed.
The improvement enabled by \TUMLSC{} is more sensitive to the aggregate parameters than \WIML{}, which clearly outperforms \MQ{} and \MQLF{}.
When increasing the number of processing threads, \TUMLMC{}'s performance increases both in terms of throughput and latency.
Moreover, its scaling does not degrade when increasing the number of threads above the number for which the highest rate is achieved.

\section{Related Work}\label{sec:relatedwork}

Parallel execution of data streaming operators has been addressed mainly by means of partitioned parallelism \cite{gulisano2012streamcloud,balkesen2013adaptive}, where multiple instances of an operator are assigned to distinct partitions of a given stream.
The way tuples are routed to instances (round-robin, hash-based \cite{gulisano2012streamcloud} or pane-based \cite{balkesen2013adaptive}) depends on the operator's semantics.
It should be noticed that partitioned parallelism is orthogonal to our parallelization technique since we focus on the performance improvement of individual instances of an operator.
The work presented in \cite{schneider2009elastic} discusses a multithreaded elastic streaming protocol that adjusts the number of processing threads depending on the system load. Similarly to our \TUMLMC{} implementation, the protocol defines a single \textit{work queue} from which multiple \textit{worker threads} consume tuples. Nevertheless, that does not take into account sorting of input tuples, which is one of the key challenges addressed in our work. Moreover, the authors do not discuss improvements enabled by concurrent data structures in the multithreaded environment.
With respect to parallel data streaming in the context of multi-core CPUs and GPUs, \cite{schneidert2010evaluation} present a parallel implementation of the aggregate operator and study how it performs on distinct parallel architectures.
The aggregate model discussed by the authors differs from ours since windows are tuple-based and the overall number of distinct group-by values is known before-hand and does not vary over time. Moreover, no discussion is provided about deterministic processing in the context of multiple input sources.
Parallel processing in multi-core CPUs and GPUs is also discussed by \cite{cugola2012low} but, differently from us, the authors focus on pattern detection rather than data aggregation and rely on automata-based incremental processing.

As discussed in section \ref{sec:datastreamingandmultiwayaggregation}, one of the challenges in providing deterministic processing is the merging of multiple timestamp sorted input streams.
This has been discussed in the context of parallel-distributed SPEs \cite{srivastava2004flexible,gulisano2012streamcloud} and replica-based fault tolerance protocols for data streaming \cite{balazinska2008fault}.
Existing approaches for streaming aggregation rely on separated input queues (similar to the $MQ$ protocol).
As shown in our evaluation, this merging is not efficient and implementations such as \TUMLSC{} can drastically improve the overall operator's performance.

\section{Conclusions and Future Work}\label{sec:conclusions}

Providing the appropriate data structures that best fit the needs of a concurrent application is a key research issue, as emphasized by \cite{Shavit:2011,Michael:2013} and also seen in examples such as \cite{GidenstamPT:2010,wimmer:2014}.
In this paper, we study data structures as articulation points in  the context of streaming aggregation, analysing the concurrency needs and proposing methods to meet them.
We propose proper data structures for managing tuples and windows (\tuplist{} and \winlist{}).
Their operations and their lock-free implementations
enable better interleaving and hence improve the balancing and the parallelism of the aggregate operator's processing stages.
As shown in the extensive evaluation based on real-world datasets, our enhanced aggregate implementations outperform existing ones both in terms of throughput and latency, and are able to handle heavier streams, increasing the processing capacity up to one order of magnitude.

These results and the analysis of the role of data structures as articulation points to facilitate concurrency and balancing of the work among streaming operators, open up new venues in the broader context of data streaming, including the enhancement of other operators and the enhancement of SPEs architectures.


\bibliographystyle{plain}

\end{document}